\let\vec\bm
\newcommand{\ignore}[1]{}
\newcommand{\R}{\mathbb{R}}
\begin{document}

\title{Multimodal Information Gain in Bayesian Design of Experiments}


\author{Quan Long}


\institute{30 Blue Ridge Lane, West Hartford, CT, USA \\
              Tel.: +18605018265\\
              \email{quan.spartanlq@gmail.com}         
}

\date{Received: date / Accepted: date}

\maketitle

\begin{abstract}
One of the well-known challenges in optimal experimental design is how to efficiently estimate the nested integrations of the expected information gain. The Gaussian approximation and associated importance sampling have been shown to be effective at reducing the numerical costs. However, they may fail due to the non-negligible biases and the numerical instabilities. A new approach is developed to compute the expected information gain, when the posterior distribution is multimodal - a situation previously ignored by the methods aiming at accelerating the nested numerical integrations. Specifically,  the posterior distribution is approximated using a mixture distribution constructed by multiple runs of global search for the modes and weighted local Laplace approximations. Under any given probability of capturing all the modes, we provide an estimation of the number of runs of searches, which is dimension independent. It is shown that the novel global-local multimodal approach can be significantly more accurate and more efficient than the other existing approaches, especially when the number of modes is large. The methods can be applied to the designs of experiments with both calibrated and uncalibrated observation noises.

\keywords{Weighted Laplace approximation \and Gaussian mixture \and Expected information gain \and Optimal design of experiments\and Machine learning}
\end{abstract}

\section{Introduction}
\label{intro}
Statistical optimal experimental design is a systematic approach to increase data efficiency (i.e., actively acquiring as less data as possible to train a model for given tolerance). In linear optimal experimental design, the optimalities are the norms of the predictive covariance matrix, for example, A-optimality, D-optimality and E-optimality \cite{atkinson2007optimum}\cite{atkinson1992optimum}. A-optimality denotes the trace of the predictive covariance matrix, D-optimality denotes the determinant of the predictive covariance matrix and E-optimality refers to the matrix norm of the predictive covariance matrix. The optimalities can also be correspondingly interpreted from geometrical perspectives. For example, \cite{Titterington} studied the D-optimality from such a point of view.

Nonlinear optimal experimental design is more complicated in that the corresponding optimalities rely on the values of the unknown parameters. The Bayesian approach provides a powerful framework for nonlinear designs in that the prior knowledge of the unknown parameters is naturally incorporated in terms of the relationship between the posterior distribution, the likelihood function and the prior distribution. In a Bayesian framework, the optimal design of an experiment can be obtained by maximizing the expected information gain \cite{Chaloner1995}\cite{Verdinelli2000}. Specifically, the expected information gain is the expected logarithmic ratio of the posterior probability density function (pdf) and the prior pdf of the unknown random parameters \cite{kullback1951}. The expected information gain has been widely used as a measurement of information in engineering and sciences, for example, seismology \cite{LONG2015123}, combustion\cite{BISETTI20160112}, tomography\cite{LONG201324}, biology\cite{dehideniya2019synthetic}, nonlinear dynamical systems \cite{Giovanni}\cite{HUAN2013288} and active learning \cite{Visser2008}.

Obtaining an accurate estimation of the expected information gain is challenging. Using a direct nested sample average \cite{HUAN2013288}\cite{Ryan2003} leads to a computational complexity of ${\cal{O}}(TOL^{-3})$, where $TOL$ is the tolerance on the estimation error \cite{BECK2018523}. Many efforts have been made to accelerate the computations involved in estimating the expected information gain \cite{BECK2018523}\cite{BISETTI20160112}\cite{LONG2015123}\cite{LONG201324}\cite{LONG2015849}, where the Laplace approximation \cite{Tierney1986}\cite{Tierney1989} has been used to analytically complete the inner integration and to provide a proposal distribution in an importance sampling strategy. The Laplace approximation \cite{Tierney1986}\cite{Tierney1989} expands a posterior pdf at the mode and results in a Gaussian integral which approximates the original integration. The discrepancy between the Laplace approximation and the true integral diminishes as the number of data increases \cite{LONG201324}\cite{Tierney1989}. It has also been extended to the cases, where the posterior pdf concentrates on a submanifold in \cite{LONG2015849}. The Laplace-based importance sampling has been used in the sequential design of experiments in \cite{senarathne2019laplace}. It is also note-worthy that the multilevel Monte Carlo method and the Laplace-based importance sampling can be combined to balance the number of samples in the inner and outer integrals, leading to efficient computations of the expected information gain \cite{Beck2019}\cite{Takashi2020}.  The Laplace approximation has been used together with the gradient descent method to tackle the continuous optimization problems in optimal Bayesian experimental design in \cite{carlon2020nesterov}\cite{CHEN2019163}.  Some authors used surrogate to approximate the expected information gain against the design space \cite{Overstall2017}\cite{TARAKANOV2020}. Additionally, it is note-worthy that consistent formulas have been derived for Bayesian optimal experimental design based on infinite dimensional models, for example, the partial differential equations (pdes) \cite{Alexanderian2014} \cite{Alexanderian2018}.

In the previous literature, the applicability of the Laplace approximation relies on the assumption that a single mode dominates the overall shape of the posterior pdf, with the exception of \cite{LONG2015849} which developed the Laplace approximations on the data-informed submanifold. However, this assumption clearly fails to be true, when the posterior pdf is multimodal (see \cite{Ihler2005}\cite{Lan2014}\cite{Shaw2007} for examples of multimodal posterior pdfs). Hence, the imperceptive applications of the Laplace approximation and the importance sampling scheme based on the Laplace approximation would induce irreducible bias and numerical instability.
 
In this paper, a multimodal Laplace approximation is proposed to remove the forementioned restrictions of the Laplace approximation. Under some mild regularity constraints on the posterior pdf, the approach offers the following advantages: First, it removes the bias of the conventional Laplace approximation; second, it is numerically more stable, when used as a proposal distribution in an importance sampling scheme; third, it avoids numerical under-flow \cite{BECK2018523}, when the direct nested sampling is used in a big data scenario.

The outline of the rest of the contents is as follows. Section \ref{sec:LR} briefly reviews Bayesian optimal experimental design, the expected information gain and its numerical approximations. Section \ref{sec:MLA} derives and analyzes the multimodal Laplace approximation. Given the probability of capturing all the modes, a lower bound is provided for the number of independent mode searches. Section \ref{sec:MIS} introduces an importance sampling scheme based on the multimodal Laplace approximations presented in Section \ref{sec:MLA}. The accuracy and efficiency of the proposed methods are demonstrated using two numerical experiments in Section \ref{sec:NE}. 
 
\section{Bayesian optimal experimental design and  expected information gain}\label{sec:LR}
We firstly provide the context of our experimentation. In the current scenario, we can take $m$ observations of the responses of an experiment under the fixed experimental setup. Specifically, we assume the experimental observations can be decomposed into a deterministic model and a measurement noise: 
\[\vec y_i = \vec g(\vec\theta_t, \vec\xi) + \vec \epsilon_i \quad\text{with} \quad \vec \epsilon_i \sim {\cal{N}}(\vec 0, \vec\Sigma_e) \quad \text{and}\quad i=1,...,m\,,\] 
where $\vec y_i \in \R^{d_m} $ is the $i^{th}$ observation of the measurement data, $\vec \theta_t \in {\R}^d$ is the vector of the ``true'' parameters, $\vec g(\vec\theta_t, \vec\xi)$ is a deterministic function of the system parameters and the experimental setup, $\vec\epsilon_i$ is the observational noise, $\vec\Sigma_e$ is the covariance matrix of the random measurement noise,  $\vec\xi \in \R^s$ is the vector of the design parameters, for example, the locations of the electrodes in a tomography experiment \cite{LONG201324} and the initial temperature in a combustion experiment \cite{BISETTI20160112}. For the sake of conciseness, we denote $\bar{\vec y} = [\vec y_1\,, \vec y_2\,, ..., \vec y_m]^{\top}$ in the rest of the paper. Note that we restrict our methodology to batch experimental design and multiple different experimental setup can be considered by the multi-dimensional $\vec\xi$ vector.

 Consequently, the Bayes Theorem can be expressed as follows:
\begin{align}
p(\vec\theta | \bar{\vec y}, \vec\xi) = \frac{p(\bar{\vec y} | \vec \theta, \vec \xi)p(\vec \theta)}{p(\bar{\vec y}|\vec\xi)}\,. \label{eq:BayesAllExp}
\end{align}
where $p(\cdot)$ denotes a pdf. $p(\bar{\vec y} | \vec \theta, \vec\xi)$, $p(\vec \theta)$, $p(\bar{\vec y}|\vec\xi)$ and $p(\vec\theta | \bar{\vec y}, \vec \xi)$ denote the likelihood function, the prior pdf, the marginal likelihood function (also called the evidence), and the posterior pdf, respectively.

The current derivation of Laplace approximation is based on the additive model of data and the noise is restricted to be Gaussian. That said, the validity of Laplace approximation is not compromised, while other scenarios are considered, for example, data $\vec y$ are discrete random variables, noises in the responses are stemmed from intrinsic randomness \cite{Calvetti}.

Note that we can similarly derive a Bayesian framework for the parameter identification of time dependent problems, which has been demonstrated in \cite{LONG2015123}. Additionally, the assumption of additive noise is a flexible one, for example, a multiplicative model (i.e., $\vec y = \vec g\cdot \vec\epsilon$) can be converted to its additive form via a logarithmic transformation of the involved variables. 

We focus on the information efficiency of the experiments denoted by the setup $\vec \xi$, which controls the generation of data $\bar{\vec y}$ in terms of model $\vec g$. A widely used metric is the so-called expected information gain \cite{kullback1951}\cite{Chaloner1995}. It measures the amount of new ``information'' encoded in $p(\vec\theta|\bar{\vec y}, \vec \xi)$ on top of the legacy ``information'' represented by $p(\vec\theta)$. It is essentially an averaged Kullback-Leibler (K-L) divergence \cite{kullback1951} (also called discrimination information \cite{Ghosh1987}) against all the possible values of the data. It can be used as a utility function to quantify the information contents in the data with respect to (w.r.t.) the unknown parameters and other focused unknown quantities of interest (QoI) \cite{LONG201324}. The expected information gain associated to \eqref{eq:BayesAllExp} can be written as the following:
\begin{align}
I = \int_{\cal{\vec Y}}\int_{{\vec\Theta}} log\left[\frac{p(\vec\theta | \bar{\vec y})}{p(\vec\theta)}\right]p(\vec\theta | \bar{\vec y}) p(\bar{\vec y}) d\vec\theta d\bar{\vec y}\,,\label{eq:KL} 
\end{align}
where we ignore the notation of $\vec\xi$ in the conditional distributions for the sake of conciseness. To the best of our knowledge, \cite{Ryan2003} is the first paper which noted the computational challenge of estimating the utility function using the so called double loop Monte Carlo method (DLMC). The DLMC sampler, which has been benchmarked against by many more efficient methods in recent years, is a direct nested random sampling method: 
\begin{align} 
I_{DLMC}=\frac{1}{M_1} \sum^{M_1}_{i=1}log\left[\frac{p(\bar{\vec y}_i | \vec \theta_i)}{ \frac{1}{N_1} \sum^{N_1}_{j=1} p(\bar{\vec y}_i | \vec \theta_j)}\right] \,,\label{eq:DLMC} 
\end{align}
where $M_1$ is the number of samples in the outer loop, and $N_1$ is the number of samples in the inner loop. 
$\vec\theta_i\sim p(\vec\theta)$, $\bar{\vec y}_i\sim p(\bar{\vec y} | \vec \theta_i)$, and $\vec\theta_j\sim p(\vec\theta)$.
The variance and bias of this estimator can be bounded by $\mathop{\mathbb{V}}(I_{DLMC}) = {\cal{O}}\left(\frac{1}{M_1}\right)$ and $\mathop{\mathbb{B}}(I_{DLMC})={\cal{O}}\left(\frac{1}{N_1}\right)$, respectively (see \cite{Ryan2015}\cite{LONG201324}\cite{LONG2015849} for details of how the errors are estimated ). The computational cost of this estimator can be characterized by the number of likelihood computations (i.e., $W_1 = N_1\times M_1$). 

If a numerical tolerance is imposed on the mean square error: 
\[\mathop{\mathbb{V}}(I_{DLMC}) + \mathop{\mathbb{B}}(I_{DLMC})^2 = TOL^2\,,\]
we can obtain the magnitudes of $N_1$ and $M_1$ in terms of the tolerance: 
\[ M_1 = {\cal{O}}\left( TOL^{-2}\right)\quad \text{and} \quad N_1 = {\cal{O}}\left( TOL^{-1}\right)\,.\]
Consequently, the cost of the total computational cost is proportional to the cubic of the reciprocal of the tolerance:
\[W_1 = {\cal{O}}\left(TOL^{-3}\right)\,.\]

Note that the bias caused by the inexact numerical approximations of the forward model $\vec g(\vec\theta)$ is not considered in this study. Interested readers can refer to \cite{BECK2018523} from this perspective.  


One of the more efficient approaches than the DLMC method is to invoke the Laplace approximation of the inner integral. The Laplace approximation \cite{laplace} (also referred to as Laplace method) is conventionally used to convert the integral of an exponential function to a Gaussian integral, which can be written as follows:
\begin{align}
\int e^{-mf(x)}dx = \sqrt{\frac{2\pi}{m|f''(\hat{x})|}}e^{-mf(\hat{x})} + {\cal{O}}\left(\frac{1}{m}\right),\, \label{eq:laplace}
\end{align}
where $\hat{x}$ is the global optimum of $f(x)$. The asymptotic error, ${\cal{O}}\left(\frac{1}{m}\right)$, may have different rate w.r.t. $m$, dependent on how the expansion of $f(x)$ is truncated at $\hat{x}$  \cite{Tierney1986}\cite{Tierney1989}\cite{schillings2020}. 
Expanding the log-posterior function at the maximum a posterior (MAP) estimate and following a multivariate version of \eqref{eq:laplace} (see \cite{LONG201324} for details of the derivation procedure and error analysis),
the K-L divergence (the inner integral of \eqref{eq:KL} ) can be analytically obtained using the Laplace approximation. Consequently, the expected information gain can be approximated as follows: 
\begin{align}
I =&\int_{\Theta}\int_{\cal{\vec Y}} \left[-\frac{1}{2}log\left( |\vec\Sigma| \right) - \frac{\vec{tr}(\vec\Sigma \vec H_h(\hat{\vec\theta}))}{2}\right]p(\bar{\vec y}|\vec\theta_t)p(\vec\theta_t)d\bar{\vec y}d\vec\theta_t\nonumber \\&- \frac{d}{2} - \frac{d}{2}log(2\pi) + {\cal{O}}\left(\frac{1}{m}\right)\,,\label{eq:LA1}
\end{align}
where the posterior covariance matrix can be approximated by the inverse of the Hessian matrix of the log-posterior function at the MAP estimate (i.e., $\vec\Sigma \approx \vec H^{-1}(\hat{\vec\theta}) \approx \left[\nabla_{\vec\theta}\vec g^{\top}(\hat{\vec\theta})\vec\Sigma^{-1}_{\epsilon}\nabla_{\vec\theta}\vec g(\hat{\vec\theta})\right]^{-1}$), $\hat{\vec\theta}$ is the MAP estimate of $\vec\theta$ conditioned on the synthetic data $\bar{\vec y}$, $\vec{tr}(\cdot)$ denotes matrix trace and $\vec{H}_h(\hat{\vec\theta})= \nabla_{\vec\theta}\nabla_{\vec\theta} log p(\hat{\vec\theta})$.  The corresponding numerical discritization of \eqref{eq:LA1} can be written as follows:
\begin{align}
I_{LA} =& \frac{1}{M_2} \sum^{M_2}_{i=1} \left[-\frac{1}{2}log\left( |\vec\Sigma_i| \right) - \frac{\vec{tr}(\vec\Sigma_i \vec H_h(\hat{\vec\theta}_i))}{2}\right]
- \frac{d}{2} - \frac{d}{2}log(2\pi)\,,\label{eq:LA2}
\end{align}
where the $\hat{\vec\theta}_i$ is the MAP estimate of $\vec\theta$ conditioned on the synthetic data $\bar{\vec y}_i \sim p(\bar{\vec y}|{\vec{\theta}_t}^i)$ and ${\vec{\theta}_t}^i \sim p(\vec\theta_t)$.
The computational cost of estimating the inner integral of \eqref{eq:DLMC} by running thousands times the likelihood functions can now be reduced to one run of optimization to search for $\hat{\vec\theta}_i$. 

In the scenario where the posterior Hessian matrix $\vec H(\hat{\vec\theta})$ is low-rank \cite{LONG2015849} (i.e., $r < d$ with $r$ be the rank of $\vec H(\hat {\vec\theta})$), new information is only gained in the subspace of the parameters, which are sensitive to the perturbations of the observed data. The non-intuitive relationship between the expected information gain and the nonzero eigenvalues of the posterior Hessian matrix has been revealed in \cite{LONG2015849}. The Laplace approximation of the expected information gain can be extended to the following form:

\begin{align}
I=\int_{\Theta}\int_{{\cal{Y}}} \left[ -\frac{1}{2}log |\vec\Sigma_p| 
 - \frac{r}{2} - \frac{r}{2}log(2\pi)-log\int_{\vec{T}} p_{\vec s,\vec t}(\vec 0,\vec t)d\vec t \right] p(\bar{\vec y}|\vec\theta_t)p(\vec\theta_t)d\vec\theta_t+ {\cal{O}}\left(\frac{1}{m}\right) \,, \label{eq:LAM1}
\end{align}
where $\vec\Sigma_p(\hat{\vec\theta}) = \left[\vec U^{\top}(\hat{\vec\theta})\vec H(\hat{\vec\theta}) \vec U(\hat{\vec\theta})\right]^{-1}$, the columns of $\vec U$ are the basis spanning the orthogonal space of the Jacobian kernal of $\vec H$, $\int_{\vec{T}} p_{\vec s,\vec t}(\vec 0,\vec t)d\vec t $ is an integral on the non-informative manifold $\vec T$, $\vec t$ is the new variable which parameterizes $\vec T$ and $\vec s$ is the variable which parameterizes the normal direction of the manifold, $p_{\vec s,\vec t}(\vec s,\vec t)$ is the prior pdf after a change of parameters (i.e., $p_{\vec s,\vec t}(\vec s,\vec t)d \vec s d\vec t = p(\vec\theta) d \vec\theta  $).

The equation \eqref{eq:LAM1} can be discritized as 
\begin{align}
I_{LAS}= \frac{1}{M_3} \sum^{M_3}_{i=1} \left[ -\frac{1}{2}log |\vec\Sigma_p^i| \right]
 - \frac{r}{2} - \frac{r}{2}log(2\pi)-log\int_{\vec{T}} p_{\vec s,\vec t}(\vec 0,\vec t)d\vec t \,,\label{eq:LAM2}
\end{align}
where $\vec\Sigma_p^i(\hat{\vec\theta}) = \left[\vec U^{\top}(\hat{\vec\theta}_i)\vec H(\hat{\vec\theta}_i) \vec U(\hat{\vec\theta}_i)\right]^{-1}$. Note that \eqref{eq:LAM1} and \eqref{eq:LAM2} become  \eqref{eq:LA1} and \eqref{eq:LA2}, respectively, when $r = d$. 

Although Laplace approximation enables efficient estimation of the inner loop of \eqref{eq:DLMC}, it entails an asymptotic bias when the posterior distribution is non-Gaussian. To achieve both acceleration and consistency, importance sampling based on Laplace approximation has been developed in \cite{BECK2018523} and \cite{Ryan2015} for optimal experimental design. The specific scheme can be written as follows

\begin{align}
I_{LAIS} =\frac{1}{M_4} \sum^{M_4}_{i=1}log\left[\frac{p(\bar{\vec y}_i | \vec \theta_i)}{ \frac{1}{N_4} \sum^{N_4}_{j=1} p(\bar{\vec y}_i | \vec \theta_j)\alpha_j^i}\right]\,,\label{eq:LAIS}
\end{align}
where the likelihood ratio, $\alpha_j^i = \frac{p(\vec\theta_j)}{p_g^i(\vec\theta_j)}$, $p_g^i$ is a Gaussian pdf with mean $\hat{\vec\theta}_i$ and covariance matrix 
$\vec\Sigma_i$, $\vec\theta_j \sim {\cal{N}}(\hat{\vec\theta}_i, \vec\Sigma_i)$. On top of being efficient, direct Laplace approximation and corresponding importance sampling can both avoid the numerical underflow as demonstrated in \cite{BECK2018523}, which occurs with very high probability in DLMC, when a large amount of data lead to a very concentrated posterior pdf. Note that a Student's t-distribution should be used as the proposal distribution in the importance sampling scheme, if the  posterior pdf has a heavy tail. 

\begin{table}
\begin{center}
    \begin{tabular}{| c | c |}
    \hline
        Method & Complexity  \\ \hline
        DLMC & $W_1=M_1\times( C_l + N_1\times C_l)$  \\
\hline
    LA & $W_2=M_2\times C_o $   \\ \hline
    Manifold LA & $W_3=M_3\times(C_o + C_e)$ \\ \hline
        LAIS & $W_4=M_4\times(C_o + C_l + N_4\times C_l)$ \\ \hline
    \end{tabular}
    \caption{Computational costs of the numerical schemes in the literature.}\label{tab1}
\end{center}
\end{table}

The computational costs of the numerical schemes, which we have reviewed so far, are listed in Table \ref{tab1}, where $C_o$ denotes the cost of a single run of optimization algorithm (e.g. the gradient descent method) used to find the MAP estimate and the associated Hessian matrix, $C_l$ is the cost of computing a likelihood function including computing the forward model $\vec g(\vec\theta)$ and the associated pdf, $C_e$ is the cost of the numerical eigenvalue decomposition of the Hessian matrix.

\begin{table}
\begin{center}
    \begin{tabular}{| c | c |}
    \hline
        Number of samples & Estimation  \\ \hline
        $M_1$ & $\mathop{\mathbb{V}}\left( log\left[\frac{p(\bar{\vec y}| \vec \theta)}{ p(\bar{\vec y} )}\right]\right)\times TOL^{-2}$  \\
\hline
        $M_2$ &  $\mathop{\mathbb{V}}\left( \frac{1}{2}log|\vec H(\vec\theta)|\right)\times TOL^{-2}$ \\
\hline
        $M_3$ &  $\mathop{\mathbb{V}}\left( \frac{1}{2}log|\vec H_p(\vec\theta)|\right)\times TOL^{-2}$  \\
\hline
        $M_4$ & $\mathop{\mathbb{V}}\left( log\left[\frac{p(\bar{\vec y}| \vec \theta)}{ p(\bar{\vec y} )}\right]\right)\times TOL^{-2}$  \\
\hline
        $N_1$ & $\frac{1}{2}{\mathop{\mathbb{E}} }\left(\mathop{\mathbb{V}} \left[\frac{p(\bar{\vec y}| \vec \theta)}{ p(\bar{\vec y} )} | \bar{\vec y} \right]\right)\times TOL^{-1} $ \\
\hline
        $N_4$ & $\frac{1}{2}{\mathop{\mathbb{E}} }\left(\mathop{\mathbb{V}} \left[\frac{p(\bar{\vec y}| \vec \theta)\alpha(\vec\theta)}{ p(\bar{\vec y} )} | \bar{\vec y} \right]\right)\times TOL^{-1}$  \\
\hline
    \end{tabular}
    \caption{Estimation of the numbers of samples.}\label{tab2}
\end{center}
\end{table}

We tabulate the estimates of the number of samples in all of the introduced estimators in Table \ref{tab2}. Note that $N_4 \ll N_1 $ for a concentrated single modal posterior pdf, because of the much reduced variance in the importance sampling scheme of LAIS. 

\section{Multimodal Laplace approximation (MLA)}\label{sec:MLA}
The numerical integrations based on the Laplace approximations are only valid in the cases, where the posterior pdf has a single dominant mode. An example of Gaussian mixture is used here to illustrate the potential erroneous information gain caused by a single modal Laplace approximation. In Figure \ref{fig:multimodal}, the solid curve denotes a Gaussian mixture pdf of three modes and the dashed line denotes a Gaussian pdf matching the first mode of the mixture pdf. The multimodal pdf involves three weighted modes: 
$0.3\times {\cal{N}}(2, 0.5) + 0.3\times {\cal{N}}(5, 0.2) 
+ 0.4\times {\cal{N}}(7, 0.5)$. Note that the error in considering a Gaussian pdf rather than a truncated Gaussian pdf is negligible. 

\begin{figure}[ht]
\centering
\includegraphics[scale=0.55]{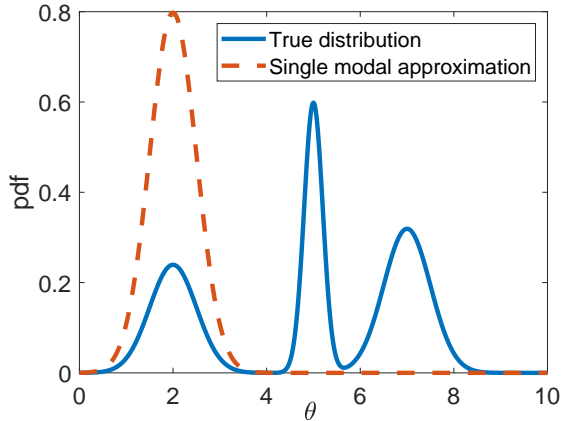}
\caption{A Gaussian mixture pdf and its single modal approximation using the Laplace approximation.}\label{fig:multimodal}
\end{figure}

Its single modal approximation can be the Gaussian distribution: ${\cal{N}}(2, 0.5)$. Using a uniform distribution, ${\cal{U}}(0,10)$, as the prior pdf, and decomposing the K-L divergence into three integrals corresponding to the disparate modes, the approximated information gain can be analytically estimated by the following expression
\begin{align}
D_{KL} \approx \sum^3_{k=1}D_{KL}(w_k{\cal{N}}(\mu_k, \sigma^2_k)\, ||\, {\cal{U}}(0,10)) = \sum^3_{k=1}\left[ w_k log(w_k) - 0.5 w_k log(\sigma^2_k) - w_k h\right]\,, \label{eq:multi_single_error}
\end{align}
where $w_k$, $\mu_k$ and $\sigma^2_k$ are the weight, the mean and the variance of the $k^{th}$ Gaussian mode, respectively, $h = - log 10$ is the logarithm of the prior pdf. Substituting the values of $w_k$, $\mu_k$ and $\sigma^2_k$ into \eqref{eq:multi_single_error}, we obtain ${D_{KL}}\approx 0.97$. To the contrary, a conventional Laplace approximation taking the first mode as the single dominant mode leads to $D_{KL}^{laplace}\approx 0.5log(\sigma_1^2)+  h(\mu_1)\approx 1.23$.

%
%

In a special case, where $w_k = 1/K$ and $\sigma^2_k = \sigma^2$, $k=1,...,K$, the discrepancy between the conventional Laplace approximation and the true value of the information gain can be found as the following:
\[D_{KL} -  D_{KL}^{laplace} = logK \,, \]
which increases as the number of modes increases.
 
Next, we present the multidimensional multimodal approximation of the expected information gain in Theorem 1. 

\begin{theorem}
Assuming that the posterior distribution of $\vec\theta$ conditioning on the synthetic data $\bar{\vec y} $ entails $K$ modes, data being modeled by $\vec y_i = \vec g(\vec\theta_t) + \vec\epsilon_i$, the third derivatives of the model $\vec g(\vec\theta_t)$ being bounded from above, we can approximate the expected information gain as follows
\begin{align}
I=& \int_{\vec\Theta} \int_{{\cal{\vec Y}}} \sum^K_{k=1}[ w_k log(w_k) - 0.5 w_k log(|\vec\Sigma_k|)\nonumber \\
& - w_k h(\hat{\vec\theta}_k)] p(\bar{\vec y} |\vec\theta_t) p(\vec\theta_t) d\bar{\vec y} d\vec\theta_t  -0.5log(2\pi)^d \nonumber\\
&- 0.5d +{\cal{O}}\left( \frac{1}{m} \right)\,, \label{eq:theorem1}
\end{align}
where $\hat{\vec\theta}_k$ is the $k^{th}$ mode of the posterior pdf, $w_k$ 
is the weight of the $k^{th}$ mode, $\vec\Sigma_k$ is the inverse of the Hessian matrix of the negative logarithm of the posterior pdf at the $k^{th}$ mode and $h(\hat{\vec\theta}_k)=log(p(\hat{\vec\theta}_k))$. 
\end{theorem}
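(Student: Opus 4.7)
The plan is to mirror the derivation of the single-mode result \eqref{eq:LA1}, but with the posterior replaced by a weighted sum of local Laplace approximations, one per mode. Concretely, I would write
$$p(\vec\theta \mid \bar{\vec y}) \;\approx\; q(\vec\theta) \;=\; \sum_{k=1}^K w_k\, \mathcal{N}(\vec\theta;\, \hat{\vec\theta}_k, \vec\Sigma_k),$$
where $\vec\Sigma_k = \vec H^{-1}(\hat{\vec\theta}_k)$ is the inverse Hessian of the negative log posterior at the $k$-th mode, and the weights are obtained by applying Laplace's formula \eqref{eq:laplace} to each basin of the unnormalized posterior $\pi(\vec\theta)=p(\bar{\vec y}\mid\vec\theta)p(\vec\theta)$, giving $w_k \propto \pi(\hat{\vec\theta}_k)\,|\vec\Sigma_k|^{1/2}$ after normalization. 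This is simply the classical Laplace argument applied mode by mode, and it is consistent with the single-mode case when $K=1$.

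Next I would substitute $q$ into the inner K-L divergence of \eqref{eq:KL} and split it as $\int q \log q\,d\vec\theta - \int q \log p(\vec\theta)\,d\vec\theta$. Assuming the modes are well separated on the scale of the local covariances (a mild regularity assumption consistent with the multimodal scenario of interest), on the effective support of $\mathcal{N}_k$ the mixture reduces to its $k$-th component, so
$$\int q\log q\,d\vec\theta \;\approx\; \sum_k w_k\log w_k + \sum_k w_k\!\int\!\mathcal{N}_k(\vec\theta)\log\mathcal{N}_k(\vec\theta)\,d\vec\theta.$$
The inner integral is the negative differential entropy of a Gaussian, $-\tfrac{d}{2}\log(2\pi e)-\tfrac{1}{2}\log|\vec\Sigma_k|$. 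For $\int q\log p(\vec\theta)\,d\vec\theta$ I would Taylor-expand $\log p(\vec\theta)$ around each $\hat{\vec\theta}_k$; because $\vec\Sigma_k = \mathcal{O}(1/m)$, the linear term has zero Gaussian mean and the quadratic and higher terms contribute $\mathcal{O}(1/m)$, leaving $\mathbb{E}_{\mathcal{N}_k}[\log p(\vec\theta)] = h(\hat{\vec\theta}_k) + \mathcal{O}(1/m)$. Collecting the pieces and using $\sum_k w_k=1$ to pull the entropy constants out of the sum reproduces the integrand of \eqref{eq:theorem1}, and the outer expectation with respect to $\bar{\vec y}\sim p(\bar{\vec y}\mid\vec\theta_t)$ and $\vec\theta_t\sim p(\vec\theta_t)$ is taken at the end.

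The main obstacle is controlling the two error sources so that the aggregate remainder remains $\mathcal{O}(1/m)$ uniformly in $K$. The first is the cross-basin leakage when we replace $\log\sum_j w_j\mathcal{N}_j$ by $\log(w_k\mathcal{N}_k)$ on the support of $\mathcal{N}_k$; this can be quantified by a partition-of-unity argument on the Voronoi-type basins of the $\hat{\vec\theta}_k$, with the cross terms decaying faster than any polynomial in $m$ thanks to the Gaussian tails. The second is the local Laplace remainder inside each basin, which is handled as in \cite{Tierney1989, LONG201324}: the bound on the third derivatives of $\vec g$ makes the cubic Taylor term of $-\log p(\vec\theta\mid\bar{\vec y})$ integrable against $\mathcal{N}_k$ with magnitude $\mathcal{O}(1/m)$. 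Propagating both errors through the finite sum over modes and through the outer expectation then yields the stated $\mathcal{O}(1/m)$ bound, completing the proof plan.
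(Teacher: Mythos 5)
Your proposal is correct and follows essentially the same route as the paper: a mode-by-mode Laplace expansion with weights $w_k\propto p(\hat{\vec\theta}_k\mid\bar{\vec y})\,|\vec\Sigma_k|^{1/2}$, separation of the modes to kill cross terms (the paper phrases this as a domain decomposition into balls $\Omega_k$ with an ${\cal{O}}(e^{-m})$ exterior contribution via the Laplace principle, you phrase it as the mixture collapsing to a single component on each basin), the Gaussian negative-entropy identity for the $-0.5\log|\vec\Sigma_k|-0.5\log(2\pi)^d-0.5d$ terms, a local expansion of the prior giving $h(\hat{\vec\theta}_k)+{\cal{O}}(1/m)$, and the outer expectation taken last. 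Your error accounting is, if anything, slightly more explicit than the paper's.
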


\begin{proof}
We first express the integral of the K-L divergence as a summation of several integrals on the sub-domains related to the multiple disparate modes:
\begin{align}
D(\bar{\vec y}) =& \sum^K_{k=1} \int_{\Omega_k} log\left[\frac{p(\vec\theta | \bar{\vec y})}{p(\vec\theta)}\right]p(\vec\theta | \bar{\vec y}) d\vec\theta + {\cal{O}}\left( e^{-m} \right)\,,\label{eq:KL2}
\end{align}
where $\Omega_k$ is a ball of radius $R_k$ covering a neighborhood of the $k^{th}$ mode. According to the Laplace principle \cite{Dembo}, most of the probability mass is concentrated at the modes as $m$ increases, namely, a big{\color{red}{-}}data scenario \cite{schillings2020}. The integral outside $\vec\Omega = {\displaystyle\cup}_{k=1}^K\Omega_k$ adds up to 
 $  Ce^{-m}$ with the constant $C = e^ {ess\,inf_{\vec\theta \in \vec\Theta \setminus \Omega}\phi(\vec\theta)} $, $\phi(\vec\theta)$ is the integrand function in \eqref{eq:KL2} scaled by the number of data:
\[\phi(\vec\theta) = \frac{1}{m} log\left[\frac{p(\vec\theta | \bar{\vec y})}{p(\vec\theta)} \right] p(\vec\theta | \bar{\vec y})\,.\]
 
 Next, locally approximating $log[p(\vec\theta | \bar{\vec y})]$ in each $\Omega_k$ by a truncated second order Taylor expansion leads to:
 \begin{align}
 p(\vec\theta | \bar{\vec y}) = \tilde{p}_k(\vec\theta | \bar{\vec y})  + 
 e^{\left[{\cal{O}}\left(|\vec\theta - \hat{\vec\theta}_k |^3  \right)   \right]}\,,\nonumber
 \end{align}
 where
 
\begin{align} 
\tilde{p}_k(\vec\theta | \bar{\vec y})&= p(\hat{\vec\theta}_k| \bar{\vec y} ) exp \left[ - \frac{1}{2} (\vec\theta - \hat{\vec\theta}_k )^{\top} \vec \Sigma_k^{-1} (\vec\theta - \hat{\vec\theta}_k )\right]\nonumber\\
&= w_k p^g_k (\vec\theta)\,,\nonumber
\end{align}
 with $p^g_k (\vec\theta)$ be the pdf of the Gaussian distribution-
${\cal{N}}(\hat{\vec\theta}_k, \vec\Sigma_k)$, with $ \vec{\Sigma}_k^{-1} = -\nabla_{\vec\theta}\nabla_{\vec\theta} log \left[p(\hat{\vec\theta}_k| \bar{\vec y})\right]$ is the Hessian matrix of the negative logarithm of the posterior pdf at $\hat{\vec\theta}_k$ and $w_k=p(\hat{\vec\theta}_k | \bar{\vec y})\sqrt{2\pi}^d |\vec{\Sigma}_k|^{1/2}$. Note that the first derivative of $log [p(\vec\theta | \bar{\vec y})]$ at $\hat{\vec\theta}_k$ is zero (i.e., $\nabla_{\vec\theta} log\left[p(\hat{\vec\theta}_k | \bar{\vec y})\right]= 0$). Substituting $p(\vec\theta|\bar{\vec y})$ by $\tilde{p}_k(\vec\theta|\bar{\vec y})$ in $\Omega_k$,  we can approximate the K-L divergence of \eqref{eq:KL2} as follows:

\begin{align}
D(\bar{\vec y}) =& 
\sum^K_{k=1} \int_{\Omega_k} log\left[\frac{\tilde{p}_k(\vec\theta | \bar{\vec y})}{p(\vec\theta)}\right]\tilde{p}_k(\vec\theta | \bar{\vec y}) d\vec\theta +{\cal{O}}\left( \frac{1}{m} \right) \nonumber \\
=&\sum^K_{k=1} \int_{\Omega_k} log\left[\frac{w_k p^g_k(\vec\theta )}{p(\vec\theta)} \right]  w_k p^g_k(\vec\theta) d\vec\theta + {\cal{O}}\left( \frac{1}{m} \right)\nonumber\\
=&\sum^K_{k=1} w_k log( w_k) + \sum^K_{k=1} \int_{\Omega_k} w_k log\left[\frac{p^g_k(\vec\theta )}{p(\vec\theta)}\right]p^g_k(\vec\theta) d\vec\theta +{\cal{O}}\left( \frac{1}{m} \right)\,. \label{eq:tildeD}
\end{align}

The second term in the last line of \eqref{eq:tildeD} can be obtained analytically by expanding the prior pdf locally around $\hat{\vec\theta}_k$, and the Laplace approximation of a K-L divergence, in the case of a multidimensional multimodal posterior pdf, can be expressed as
\begin{align}
D(\bar{\vec y} ) = \tilde{D}(\bar{\vec y})
+ {\cal{O}}\left( \frac{1}{m} \right) \nonumber
\end{align}
with
\begin{align}
\tilde{D}(\bar{\vec y})=&
\sum^K_{k=1} \left[ w_k log (w_k)  - 0.5 w_k log\left[|\vec\Sigma_k(\hat{\vec\theta}_k)|\right] - w_k h(\hat{\vec\theta}_k) \right]\nonumber\\
& - 0.5 log(2\pi)^d - 0.5 d\,. \label{eq:multiD_KL}
\end{align}

Consequently, the corresponding expected information gain can be expressed as 
\begin{align}
I = &\int_{{\cal{\vec Y}} } D(\bar{\vec y}) p(\bar{\vec y}) \bar{\vec y} =\int_{\vec\Theta}\int_{{\cal{\vec Y}}  } D(\bar{\vec y}) p(\bar{\vec y}|\vec\theta_t)p(\vec\theta_t)d \bar{\vec y} d\vec\theta_t\nonumber \\
=&\int_{\vec\Theta}\int_{{\cal{\vec Y}}  } \tilde{D}(\bar{\vec y}) p(\bar{\vec y}|\vec\theta_t)p(\vec\theta_t)d \bar{\vec y} d\vec\theta_t + {\cal{O}}\left( \frac{1}{m} \right)
\end{align}  
$\square$
\end{proof}

The posterior modes are commonly unknown and need to be obtained numerically. We find the modes using runs of optimization starting from {\color{red}{a}} few randomized initial points. Algorithm 1 shows the procedure of computing the expected information gain via multimodal Laplace approximation. 

\begin{algorithm}[pht]
\caption{Multimodal Laplace approximation}\label{alg:MLA}
\begin{algorithmic} [1]
\STATE inputs: $p(\vec\theta)$, $g(\vec\theta)$, $\vec \xi$, $n$ 
\STATE draw a sample from the prior distribution 
$\vec\theta^i_t \sim p(\vec\theta)$   
\STATE draw a sample of data from the likelihood function $\bar{\vec y}_i \sim p(\bar{\vec y} | \vec\theta^i_t)$   
\STATE solve the minimization problem: $\arg\min_{\vec\theta} [-log(p(\vec\theta|\bar{\vec y}_i   ))]$ $n$ times with distinct initial points, the distinct local optimal solutions are $\{\hat{\vec\theta}^i_k, k=1,...,K\}$ (note that $K$ can also vary w.r.t. $i$)
\STATE compute $\tilde{D}_i$ of \eqref{eq:multiD_KL} using the Hessian (${\vec\Sigma^i_k}^{-1}$) and the normalized likelihood value ($w_k^i$) 
\STATE  Repeat steps 2-5 $M_5$ times and take the sample average of $\tilde{D}_i$: $I_{MLA} = \frac{1}{M_5} \sum_{i=1}^{M_5} \tilde{D}_i$ 
\end{algorithmic}
\end{algorithm}

The cost of computing the expected information gain using Algorithm 1 can be estimated as the following: 
\[W_5 = M_5 \times( n \times C_o + C_l)\,, \]
where $M_5$ is the number of samples of $\vec\theta_t$ drawn from the prior distribution, $n$ is the number of optimization runs. 

In addition, $n$ can be estimated for any given probability that all modes are captured using Algorithm 1. Such probability can be expressed as the following:
\begin{align}
1-P(\cup e_i) &= 1 - \left[\sum_{i=1}^K P(e_i) - \sum_{i>j} P(e_i \cap e_j) + \sum_{i>j>k} P(e_i \cap e_j \cap e_k) - ...\right] \nonumber\\
& \geq 1 - \sum_i^K P(e_i)\,, \nonumber
\end{align}
where $Pr(\cdot)$ is a probability function, $Pr(\cup e_i)$ is the probability that at least one mode is missed by the $n$ runs of optimization, $e_i$ is the set of events where the $i^{th}$ mode is missed in all the $n$ runs of optimization. 
The probability of $e_i$ can be written as 
\[Pr(e_i) = (1-p_i)^n\], 
where $p_i$ is the probability that an i.i.d. initial value of $\theta$ converges to the $i^{th}$ mode. It is straightforward to see that the following relationship exists: 
\begin{align}
1 - \sum_i^K Pr(e_i) \leq 1 -  K (1-p)^n \,, \label{eq:ei}
\end{align}
where $p = min(p_i)$. Let $\beta = \sum_i^K Pr(e_i) $, we can obtain the following inequality from \eqref{eq:ei}:
\[\beta \geq K(1-p)^n\,.\] 
Subsequently, we have the following lower bound of $n$:
\begin{align}
n \geq \frac{log\beta - log K}{log(1-p)}\,,
\label{eq:nor}
\end{align} 
which is independent to the dimension of $\vec\theta$. Note that $p$ is associated to the smallest size of the basins of attractions of all the modes.
\section{Multimodal nested importance sampling (MNIS)}\label{sec:MIS}
The method in the previous section approximates a multimodal posterior pdf using a mixture Gaussian pdf and replaces the inner integration in \eqref{eq:KL} using multiple weighted Gaussian integrals. It provides a remedy to the problems caused a blindly applied conventional Laplace approximation. However, its error term, ${\cal{O}}\left( \frac{1}{m} \right)$, may not be negligible, when $m$ is not so large.

Importance sampling is an ideal method to marry the powers of the Laplace approximation and random sampling. The Laplace approximation based importance sampling has been demonstrated with salient efficiency and accuracy, when it is used to calculate a posterior expectation, for example, the expected information gains used in Bayesian optimal experimental design  \cite{BECK2018523}\cite{Ryan2015}\cite{schillings2020}. A multimodal importance sampling method based on the multimodal Laplace approximations is presented in this section to relieve the numerical instability (i.e., value of the likelihood ratio be close to infinity, when samples are taken from the tail of the proposal pdf) induced by inappropriately using a Gaussian proposal. Theorem 2 summarizes the results of the change of measure. 
 
\begin{theorem}\label{th2}
Assuming that the posterior distribution of $\vec\theta$, conditioning on the synthetic data $\bar{\vec y} $, entails $K$ modes, data being modeled by $\vec y_i = \vec g(\vec\theta_t) + \vec\epsilon_i$, the third derivatives of the model $\vec g(\vec\theta_t)$ being bounded from above, for any confidence level $\alpha>0$, there exist $M_0$ and $N_0$, for all $M_6>M_0$ and $N_6>N_0$, we have 
\begin{align}
&Pr\left(\left| I-\frac{1}{M_6}\sum^{M_6}_{i=1}log\left[\frac{p(\bar{\vec y}_i | \vec \theta_i)}{ \frac{1}{N_6} \sum^{N_6}_{j=1} p(\bar{\vec y}_i | \vec \theta_j)\beta^i_j}\right]\right| > \frac{C_1}{\sqrt{M_6}}  + \frac{C_2}{N_6}\right) < \alpha \,,\label{eq:MNIS}
\end{align}
with 
\begin{align}
C_1= \sqrt{\mathop{\mathbb{V}}\left( log\left[\frac{p(\bar{\vec y}| \vec \theta)}{ p(\bar{\vec y} )}\right]\right)}\nonumber
\,,\, 
\end{align}
and
\begin{align}
C_2 = \frac{1}{2}{\mathop{\mathbb{E}} }\left(\mathop{\mathbb{V}} \left[\frac{p(\bar{\vec y}| \vec \theta)\beta(\vec\theta)}{ p(\bar{\vec y} )} | \bar{\vec y} \right]\right)\,,\nonumber
\end{align}
where $Pr(\cdot)$ denotes a probability function, $\beta^i_j = \frac{p(\vec\theta_j)}{p^i_{gm}(\vec\theta_j)}$, $p^i_{gm}$ is the pdf of the Gaussian mixture $\sum^K_{k=1}m_k {\cal{N}}\left(\hat{\vec\theta}^i_k, \vec\Sigma^i_k \right)$, $M_6$ is the number of samples of the unknown parameters drawn from the prior distribution, $N_6$ is the number of samples of the unknown parameters drawn from the Gaussian mixture. 
\end{theorem}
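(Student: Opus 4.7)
The estimator is a nested double-loop Monte Carlo whose inner loop $\hat p_i = N_6^{-1}\sum_{j=1}^{N_6} p(\bar{\vec y}_i|\vec\theta_j)\beta^i_j$ is the importance sampling estimator of the evidence $p(\bar{\vec y}_i)$ using the Gaussian mixture proposal constructed in Section \ref{sec:MLA}, so by design $\mathop{\mathbb{E}}[\hat p_i \mid \bar{\vec y}_i] = p(\bar{\vec y}_i)$. The plan is to split the total error via the triangle inequality as
\begin{align*}
|I - \hat I_{MNIS}| \leq \bigl|\hat I_{MNIS} - \mathop{\mathbb{E}}[\hat I_{MNIS}]\bigr| + \bigl|\mathop{\mathbb{E}}[\hat I_{MNIS}] - I\bigr|\,,
\end{align*}
to control the outer stochastic fluctuation by a concentration bound of order $C_1/\sqrt{M_6}$ and the inner systematic bias (arising from applying a nonlinear logarithm to the unbiased $\hat p_i$) by a Taylor expansion of order $C_2/N_6$, and then combine the two via a union bound at confidence $\alpha$.

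For the stochastic term I would argue that, conditional on the inner draws, the $M_6$ outer summands are i.i.d., and as $N_6$ grows each one is close in distribution to the exact integrand $\log[p(\bar{\vec y}_i|\vec\theta_i)/p(\bar{\vec y}_i)]$ whose variance is $C_1^2$ by the definition in the statement. Chebyshev's inequality applied to the outer sample mean then yields, for any $\alpha>0$, a threshold $M_0(\alpha)$ above which the outer fluctuation is bounded by $C_1/\sqrt{M_6}$ with probability at least $1-\alpha/2$.

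For the bias term I would Taylor-expand $\log \hat p_i$ around $\log p(\bar{\vec y}_i)$,
\begin{align*}
\log\hat p_i = \log p(\bar{\vec y}_i) + \frac{\hat p_i - p(\bar{\vec y}_i)}{p(\bar{\vec y}_i)} - \frac{1}{2}\left[\frac{\hat p_i - p(\bar{\vec y}_i)}{p(\bar{\vec y}_i)}\right]^2 + R\,,
\end{align*}
take the conditional expectation given $\bar{\vec y}_i$ so that the first-order term vanishes by unbiasedness of $\hat p_i$, and substitute $\mathop{\mathbb{V}}(\hat p_i \mid \bar{\vec y}_i) = N_6^{-1}\mathop{\mathbb{V}}(p(\bar{\vec y}_i|\vec\theta)\beta(\vec\theta)\mid \bar{\vec y}_i)$. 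Taking the outer expectation over $\bar{\vec y}_i$ identifies the leading bias as $-C_2/N_6$, and a threshold $N_0(\alpha)$ is then chosen so that the Taylor remainder $R$ is dominated by the leading $C_2/N_6$ term; a union bound with the Chebyshev event yields the stated probability.

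The main obstacle is controlling the remainder $R$ uniformly in $\bar{\vec y}_i$, since an unbounded importance weight $\beta$ could push $\hat p_i$ toward zero and render $\log\hat p_i$ arbitrarily singular. This is precisely the failure mode of a single-Laplace proposal when the posterior is multimodal, and it is where the construction of Section \ref{sec:MLA} pays off: because $p^i_{gm}$ covers all modes with probability quantified by \eqref{eq:nor}, the weight $\beta(\vec\theta)=p(\vec\theta)/p^i_{gm}(\vec\theta)$ has bounded second moment and $\hat p_i/p(\bar{\vec y}_i)$ concentrates at unity with light tails. The bounded-third-derivative hypothesis on $\vec g$ then certifies that the local Gaussian approximations near each mode are uniformly accurate, so that the third- and higher-order cumulants of $\hat p_i$ contribute only $O(N_6^{-2})$ to the bias, as required.
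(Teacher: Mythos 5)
Your argument is essentially correct in outline, but it is a genuinely different route from the paper's: the paper's entire proof is the single sentence that the theorem ``is a direct entailment of Proposition 1 in'' Beck et al.\ (2018), i.e.\ it outsources the error analysis of Laplace-based importance sampling to that reference and only swaps the single-Gaussian proposal for the mixture $p^i_{gm}$. What you have written is a self-contained reconstruction of that cited proposition: the triangle-inequality split into outer statistical fluctuation and inner nonlinearity bias, unbiasedness of $\hat p_i$ under the change of measure, the second-order Taylor expansion of $\log \hat p_i$ identifying the leading bias $-\tfrac{1}{2}\mathop{\mathbb{V}}(\hat p_i\mid\bar{\vec y}_i)/p(\bar{\vec y}_i)^2 = -C_2/N_6$, and the observation that the multimodal proposal is exactly what keeps the importance weights $\beta$ well behaved. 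This buys the reader an actual explanation of where $C_1$ and $C_2$ come from, which the paper does not provide; the paper's citation buys brevity and a rigorous treatment of the remainder that you only sketch.

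One wrinkle deserves flagging. As you state it, Chebyshev (or the CLT) does \emph{not} give a threshold $M_0(\alpha)$ above which the fluctuation is below $C_1/\sqrt{M_6}$ with probability $1-\alpha/2$: the standardized fluctuation $\sqrt{M_6}\,(\hat I - \mathop{\mathbb{E}}\hat I)/C_1$ converges to a standard normal, so the probability of exceeding one standard error tends to a constant ($\approx 0.32$) rather than to zero, no matter how large $M_6$ is. The bound only holds for small $\alpha$ if $C_1$ carries the $(1-\alpha/2)$-quantile factor $C_\alpha$, which is present in Proposition 1 of Beck et al.\ but has been dropped in the theorem statement here; your reconstruction inherits that omission rather than introduces it, but the step ``Chebyshev yields probability at least $1-\alpha/2$'' is not literally valid as written. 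Also note that the bias contribution is a deterministic quantity once the remainder is controlled, so no union bound with a second random event is really needed; the probabilistic content comes from the outer loop alone.
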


\begin{proof}
The theorem is a direct entailment of Proposition 1 in \cite{BECK2018523}. 
$\square$
\end{proof}

Algorithm \ref{alg:MMIS} lists the steps of computing the expected information gain via a change of Gaussian mixture measure, with the errors stated in Theorem \ref{th2}.

\begin{algorithm}[pht]
\caption{Importance sampling based on multimodal Laplace approximation}\label{alg:MMIS}
\begin{algorithmic} [1]
\STATE inputs: $p(\vec\theta)$, $g(\vec\theta)$, $\vec \xi$, $n$ 
\STATE draw a sample from the prior distribution 
$\vec\theta^i_t \sim p(\vec\theta)$   
\STATE draw a sample of data from the likelihood function $\bar{\vec y}_i \sim p(\bar{\vec y} | \vec\theta^i_t)$   
\STATE solve problem $\arg\min_{\vec\theta} [-log(p(\vec\theta|\bar{\vec y}_i  ))]$ $n$ times with distinct initial points, the distinct local optimal solutions are $\{\hat{\vec\theta}^i_k, k=1,...,K\}$ (note that $K$ can also vary w.r.t. $i$)
\STATE  draw $N_6$ i.i.d. samples from Gaussian mixture: $\vec\theta_j \sim \sum^K_{k=1} w_k {\cal{N}}( \hat{\vec\theta}^i_k, \vec\Sigma^i_k )$
\STATE  compute the inner sample average in \eqref{eq:MNIS}: $L_i=\frac{1}{N_6} \sum^{N_6}_{j=1} p(\bar{\vec y}_i | \vec \theta_j)\beta^i_j$ 
\STATE Repeat steps 2-6 $M_6$ times and compute the outer sample average in \eqref{eq:MNIS}:
$I_{MNIS}= \frac{1}{M_6}\sum^{M_6}_{i=1}log\left[\frac{p(\bar{\vec y}_i | \vec \theta_i)}{ L_i}\right]$ 
\end{algorithmic}
\end{algorithm}

The computational complexity of Algorithm 2 can be expressed as follows:
\[W_6 = M_6 \times [n\times C_o + C_l + N_6\times C_l ]\,.\]
Thanks to the change of measure, $C_2$ in Theorem 2 is very small and only a small $N_6=C_2/(\gamma TOL)$  is needed to control the bias of this multimodal importance sampler. $\gamma TOL$ is the tolerance on the bias. Consequently, the following inequality can be true in practice 
\[N_6\times C_l \ll n\times C_o + C_l \,. \]
Therefore, the computational cost of the MNIS method is similar to the cost of the direct multimodal Laplace approximation. We will demonstrate the efficacy of the proposed methods using numerical examples in the next section. 
 
\section{Numerical validation}\label{sec:NE}

\subsection{A simple model of multiple factors}
First, we prove the concept using a multidimensional model consisting of quadratic monomials:

\begin{align}
\vec y = \vec g + \vec \epsilon \quad \text{with} \quad \vec g = [\xi\theta_1^2 \quad (1-0.5\times\xi)\theta_2^2 \quad \theta_3^2]^{\top}\,,
\end{align}
where $\vec y \in \vec R^3$ is the vector of the synthetic data, $\vec g$ is a three-dimensional deterministic model, whose components consist of distinct monomials in each dimension, $\vec\epsilon \sim {\cal{N}}\left(\vec 0, \vec \Sigma_e \right)$ is a three-dimensional Gaussian random vector with mean zero and covariance matrix $\vec\Sigma_e=\vec I \sigma^2_e$. The unknown parameters are uniformly distributed between $-10$ and $10$ (i.e., $\theta_i \sim {\cal{U}}(-10, 10),\, i=1,2,3$). 

Figure \ref{fig:convergence} demonstrates the superior performance of the MLA and MNIS methods against the DLMC method (the baseline). The bias of the MLA method is negligible even with only $10$ sampling points (i.e., $M_5=10$ in Algorithm 1). To the contrary, the error of DLMC method is dominated by a bias, which is controlled by the number of samples in the inner loop (i.e., $N_1$ in equation \eqref{eq:DLMC}). It is shown that at least $10^4$ inner samples are needed to reduce the relative error below $10\%$. Two curves of the MLA method produced by $10$ and $20$ distinct optimization runs are plotted on the left of Figure \ref{fig:convergence}. 

An estimation of the required optimization runs using \eqref{eq:nor} is 
$\frac{log(0.1)-log(8)}{log(7/8)}\approx 30$, where we let $\beta=0.1$, $K=8$ and $p=1/K$. It is observed that the MLA method converges to the DLMC method, when $n=20$. , When $n=10$, the converged result of the MLA method slightly deviates from the baseline. Note that we always used $M_1=10^3$, which effectively removed the statistical variance of the DLMC method.  
It took around $390$ s on a MacBook Air to obtain the MLA result for $n=20$ and $M_5=10^3$. In comparison, it took around $2000$ s to obtain comparable result of DLMC for $M_1=10^3$ and $N_1=10^5$. 

The MNIS results on the right of Figure \ref{fig:convergence} were produced using $20$ independent optimization runs. The horizontal axis represents $N_1$ and $N_6$ of the DLMC and MNIS methods, respectively. With few number of samples, the MNIS method is able to eliminate the bias of the approximation, hence, significantly outperforms the DLMC method. The CPU time of the MNIS method with $M_6=10^3$ and $N_6=10^3$, is very close to that of the MLA method with $M_5=10^3$, which produces similar results. 

\begin{figure}[ht!]
\centering
\includegraphics[scale=0.45]{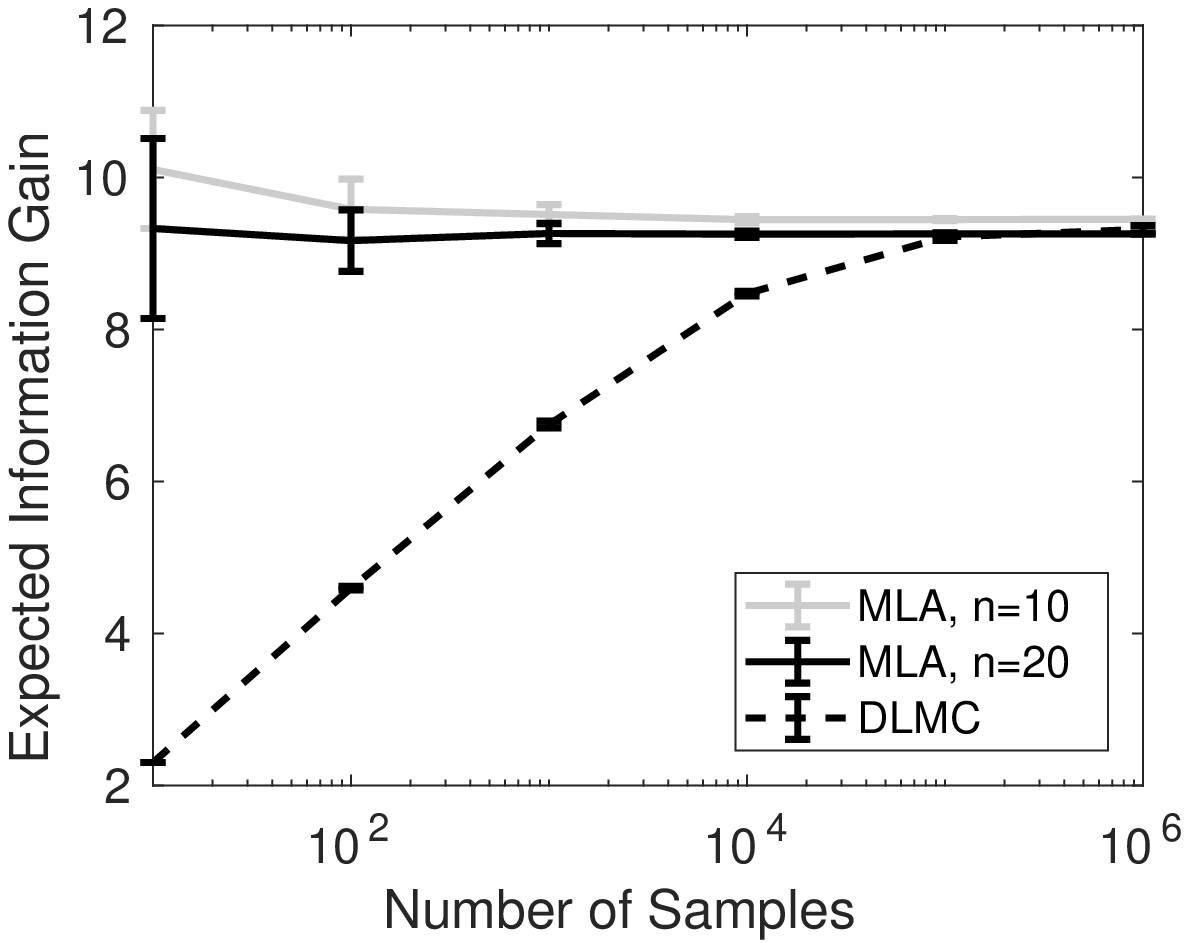}
\includegraphics[scale=0.4]{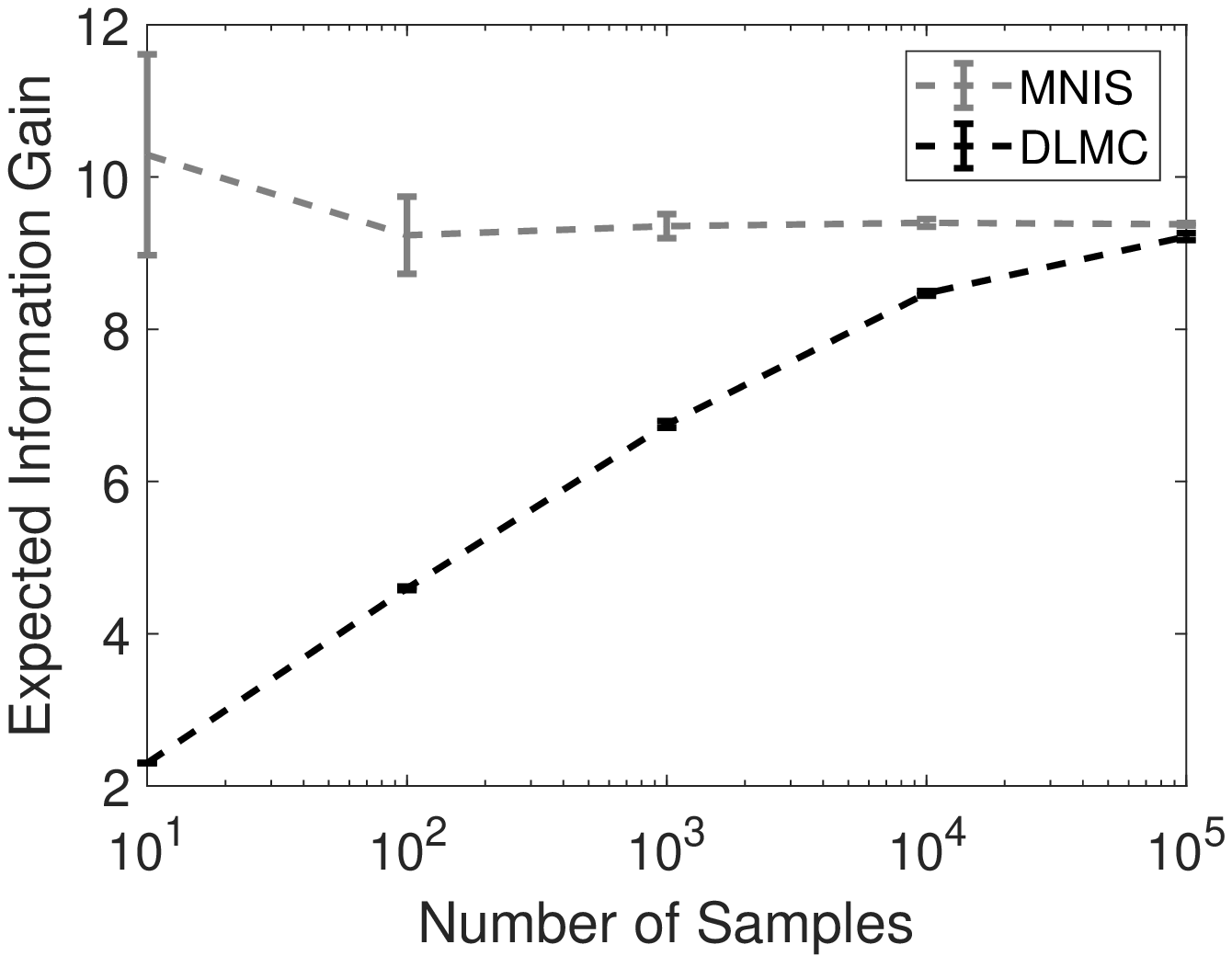}
\caption{The left figure shows the convergences of the estimated expected information gain using the MLA method and the DLMC method. $\xi= 1$. The right figure shows the convergence of the MNIS method and the DLMC method. $\xi= 1$.}\label{fig:convergence}
\end{figure} 

The initial starting point in each run of optimization is obtained according to the Latin hypercube sampling \cite{stein1987large}. 
Figure \ref{fig:I_inner_no_opt} demonstrates the effect of the number of optimization runs on the expected information gain and the averaged number of modes. The expected information gains were computed using the MLA method with $M_5=10^4$. It is observed that the relative error is around $2\%$, when  $10$ optimization runs are carried out. The relative error is smaller than $0.5\%$, when we used more than $20$ optimization runs. 

\begin{figure}[ht]
\centering
\includegraphics[scale=0.45]{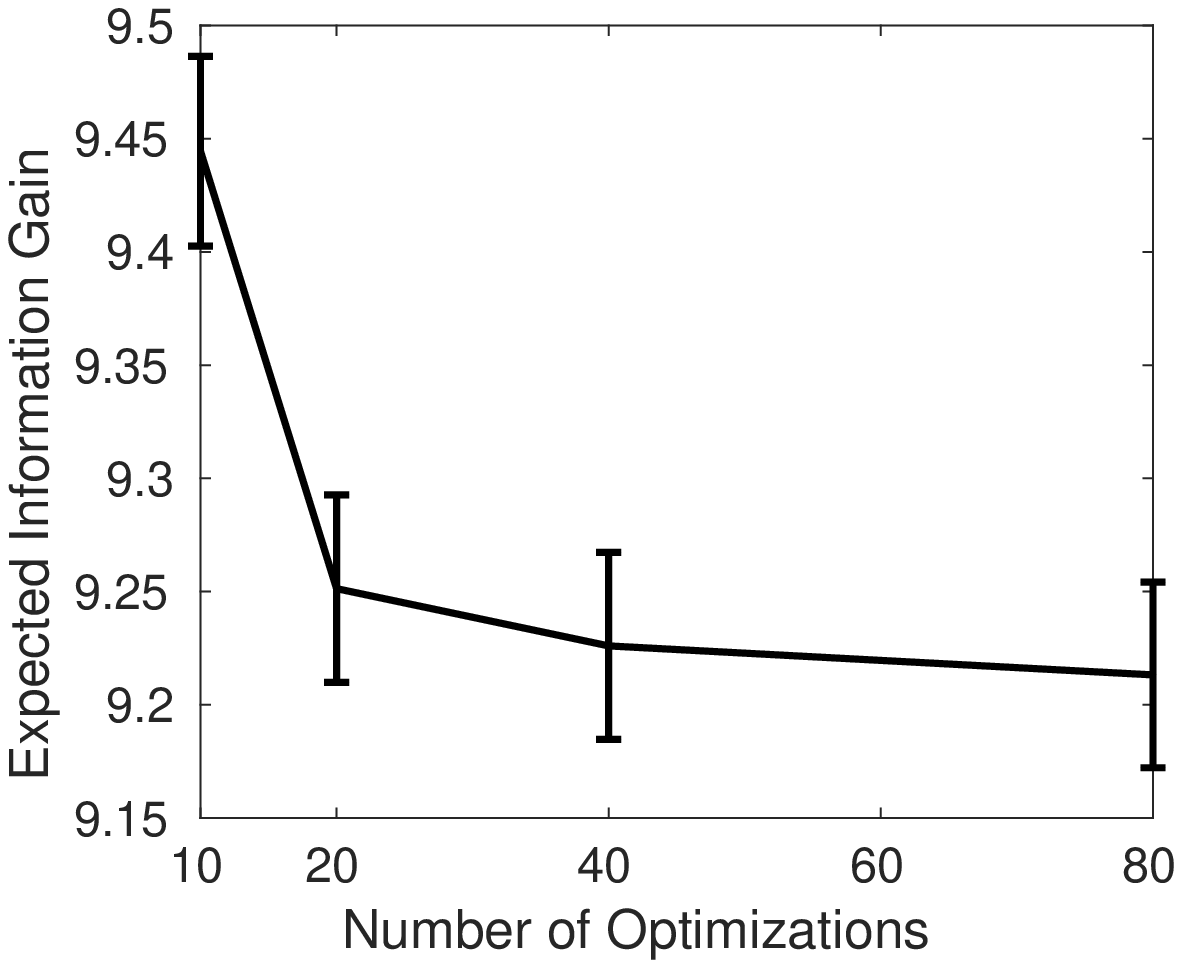}
\includegraphics[scale=0.45]{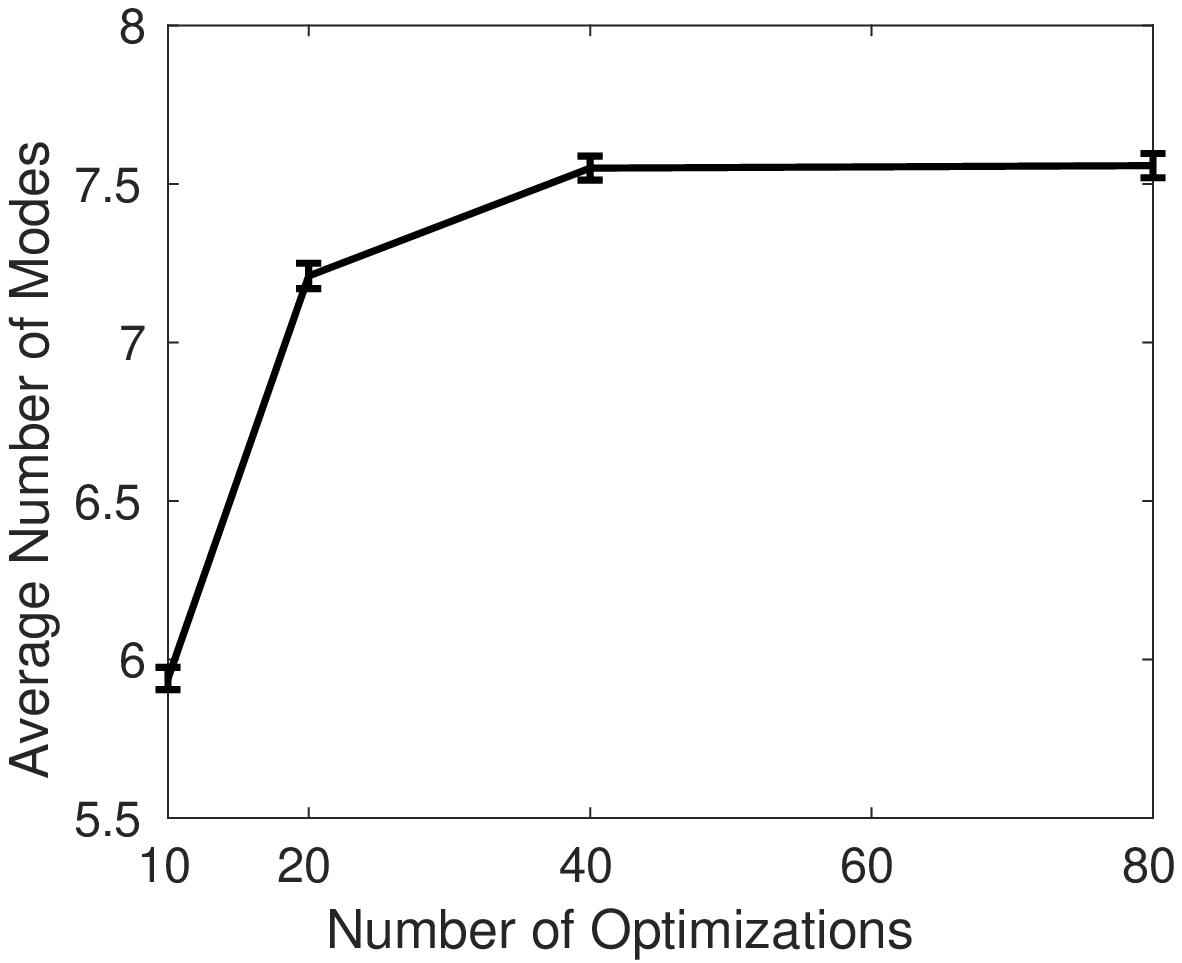}
\caption{The left figure shows the estimated expected information gains w.r.t. various numbers of optimization runs. $\xi= 1$. The right figure shows the average number of obtained modes w.r.t. various numbers of optimization runs. $\xi= 1$. } 
\label{fig:I_inner_no_opt}
\end{figure}



We plot the expected information gains of various experimental setup (i.e., $\xi \in [0.05, 0.55]$) in Figure \ref{fig:mnis-mla-dlmc1}. The variance of measurement noise is $\sigma_e^2 = 4$ on the left of Figure \ref{fig:mnis-mla-dlmc1}. It is clear that the MLA method is biased in this case due to the error term in \eqref{eq:theorem1}, while the MNIS method can exactly recover the baseline. 

Additionally, we consider a scenario, where the observation noise is not calibrated. Specifically, $\sigma_e$ is a random variable. By a little abuse of notation, the posterior pdf, $p(\vec\theta|\bar{\vec y}) = \int_{\sigma_e} p(\vec\theta|\bar{\vec y},\sigma_e) p(\sigma_e)d\sigma_e$, is now a marginal pdf against the unknown standard deviation of the Gaussian measurement noise. It can be estimated using $p(\vec\theta|\bar{\vec y}) = \frac{1}{L}\sum_{l=1}^L p(\vec\theta|\bar{\vec y},\sigma_e^l) \, \text{with} \, \sigma_e^l \sim p(\sigma_e)$. With few modifications to Algorithm 2, the expected information gain can then be estimated using Algorithm 3. 

\begin{algorithm}[pht]
\caption{Importance sampling based on multimodal Laplace approximation for uncalibrated observation noises}\label{alg:MMIS}
\begin{algorithmic} [1]
\STATE inputs: $p(\vec\theta)$, $g(\vec\theta)$, $\vec \xi$, $n$ 
\STATE draw two samples from the prior distributions 
$\vec\theta^i_t \sim p(\vec\theta)$ and $\sigma_e^i \sim p(\sigma_e)$  
\STATE draw a sample of data from the likelihood function $\bar{\vec y}_i \sim p(\bar{\vec y} | \vec\theta^i_t, \sigma^i_e)$  
\STATE solve problem $\arg\min_{\vec\theta} [-log(\frac{1}{L}\sum_{l=1}^L p(\vec\theta|\bar{\vec y}_i,\sigma_e^l))]$ $n$ times with distinct initial points, the distinct local optimal solutions are $\{\hat{\vec\theta}^i_k, k=1,...,K\}$ (note that $K$ can also vary w.r.t. $i$)
\STATE  draw $N_6$ i.i.d. samples from Gaussian mixture: $\vec\theta_j \sim \sum^K_{k=1} w_k {\cal{N}}( \hat{\vec\theta}^i_k, \vec\Sigma^i_k )$
\STATE  compute the inner sample average in \eqref{eq:MNIS}: $L_i=\frac{1}{N_6} \sum^{N_6}_{j=1} p(\bar{\vec y}_i | \vec \theta_j)\beta^i_j$ 
\STATE Repeat steps 2-6 $M_6$ times and compute the outer sample average in \eqref{eq:MNIS}:
$I_{MNIS}= \frac{1}{M_6}\sum^{M_6}_{i=1}log\left[\frac{p(\bar{\vec y}_i | \vec \theta_i)}{ L_i}\right]$ 
\end{algorithmic}
\end{algorithm}

We assume $\sigma_e \sim U(2,4)$ and compute the expected information gain of the uncalibrated experiments using Algorithm 3. The results are plotted on the right of Figure \ref{fig:mnis-mla-dlmc1}. The uncalibrated experiments provide less expected information gains than the calibrated experiments do.  
\begin{figure}[ht]
\centering
\includegraphics[scale=0.4]{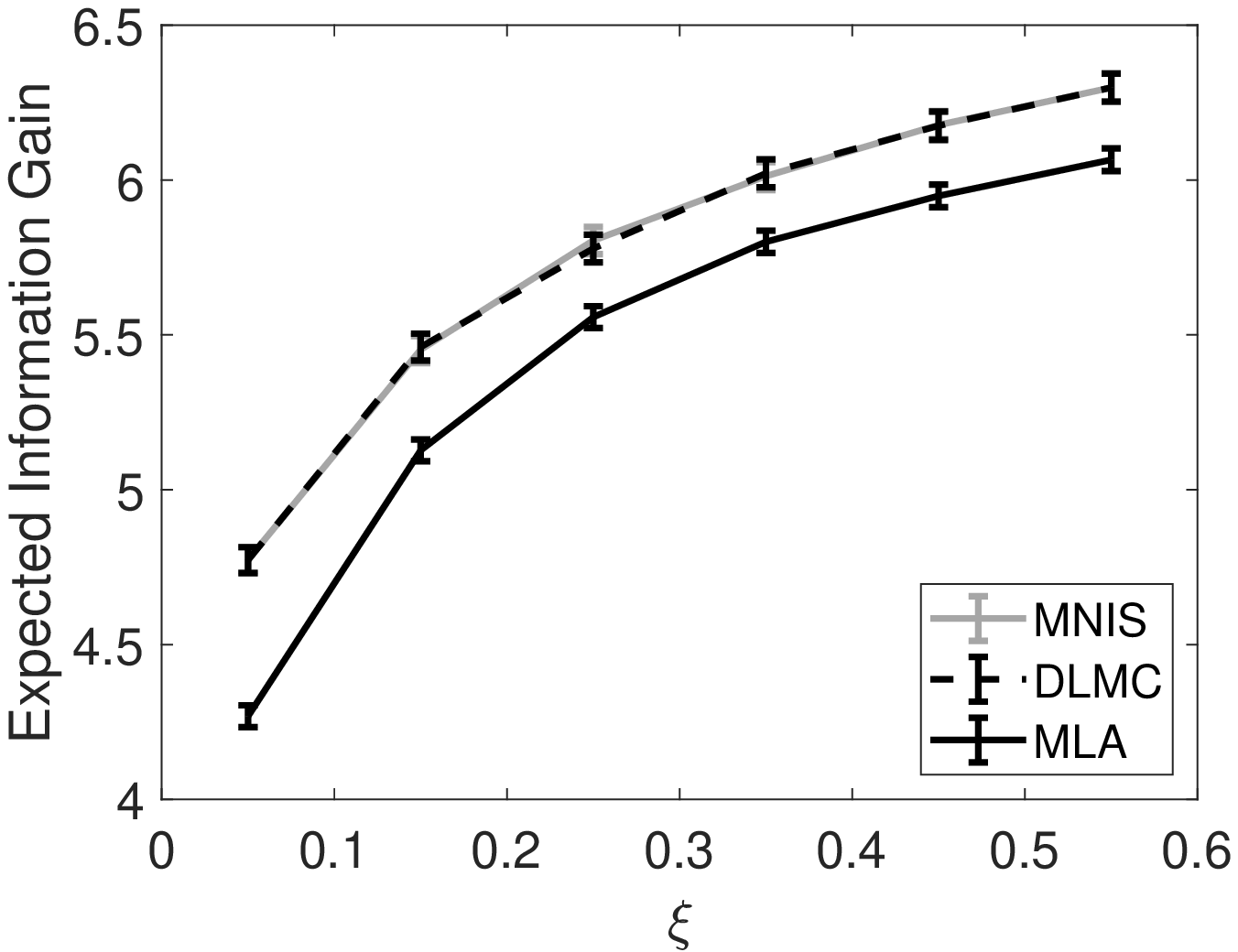}
\includegraphics[scale=0.3]{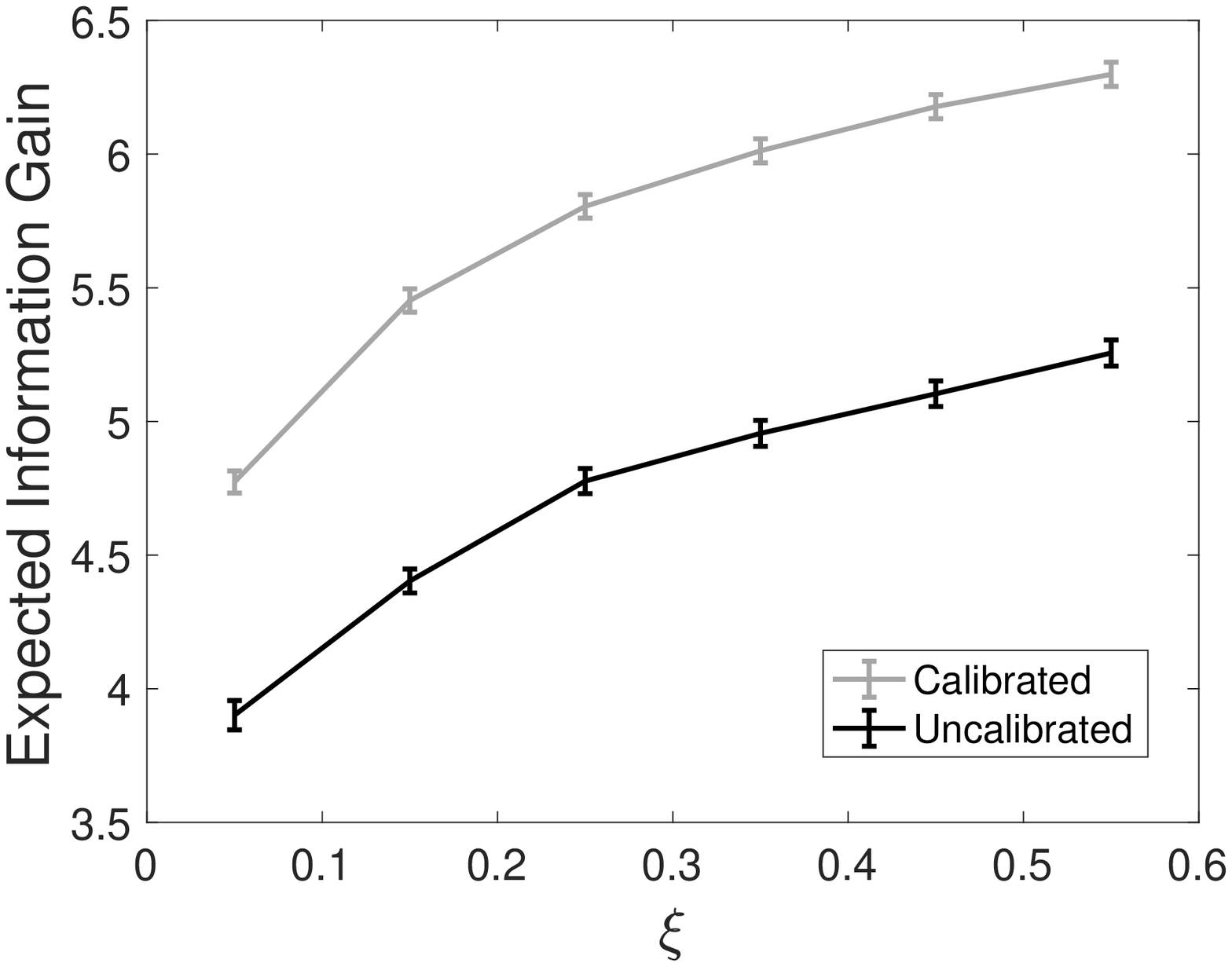}
\caption{The left figure demonstrates the expected information gains of various designs. $10^4$ and $10^5$ samples are used in the outer and inner loops of the DLMC estimator. $10^4$ samples are used in both the outer loop and the inner loop of the MNIS method. $10^4$ samples are used in the MLA method. The right figure demonstrates the expected information gains, when the observation noises are uncalibrated.}
\label{fig:mnis-mla-dlmc1}
\end{figure}


\subsection{Identification of sensor networks}

We infer the locations of the sensors in a two-dimensional space from some measurements of the distances between them. The posterior distributions of the locations are known to have multiple modes \cite{Tak2018}\cite{Ihler2005}\cite{Lan2014}. We demonstrate the proposed methodologies using two scenarios, where there are four and six sensors, respectively. 

In the first case, the coordinates ($x_1$, $z_1$, $x_2$ and $z_2$) of the first and the second sensors are unknown random variables distributed uniformly between zero and one, (i.e., $x_i \sim {\cal{U}}(0,1)$ and $z_i \sim {\cal{U}}(0,1)$, $i=1, 2$). The coordinates of the third and forth sensors are fixed at $x_3=0.5$, $z_3=0.3$, $x_4=0.3$ and $z_4=0.5$, respectively. Figure \ref{fig:sensors} shows a set of realizations of the four sensors on the left. 

\begin{figure}[ht]
\centering
\includegraphics[scale=0.4]{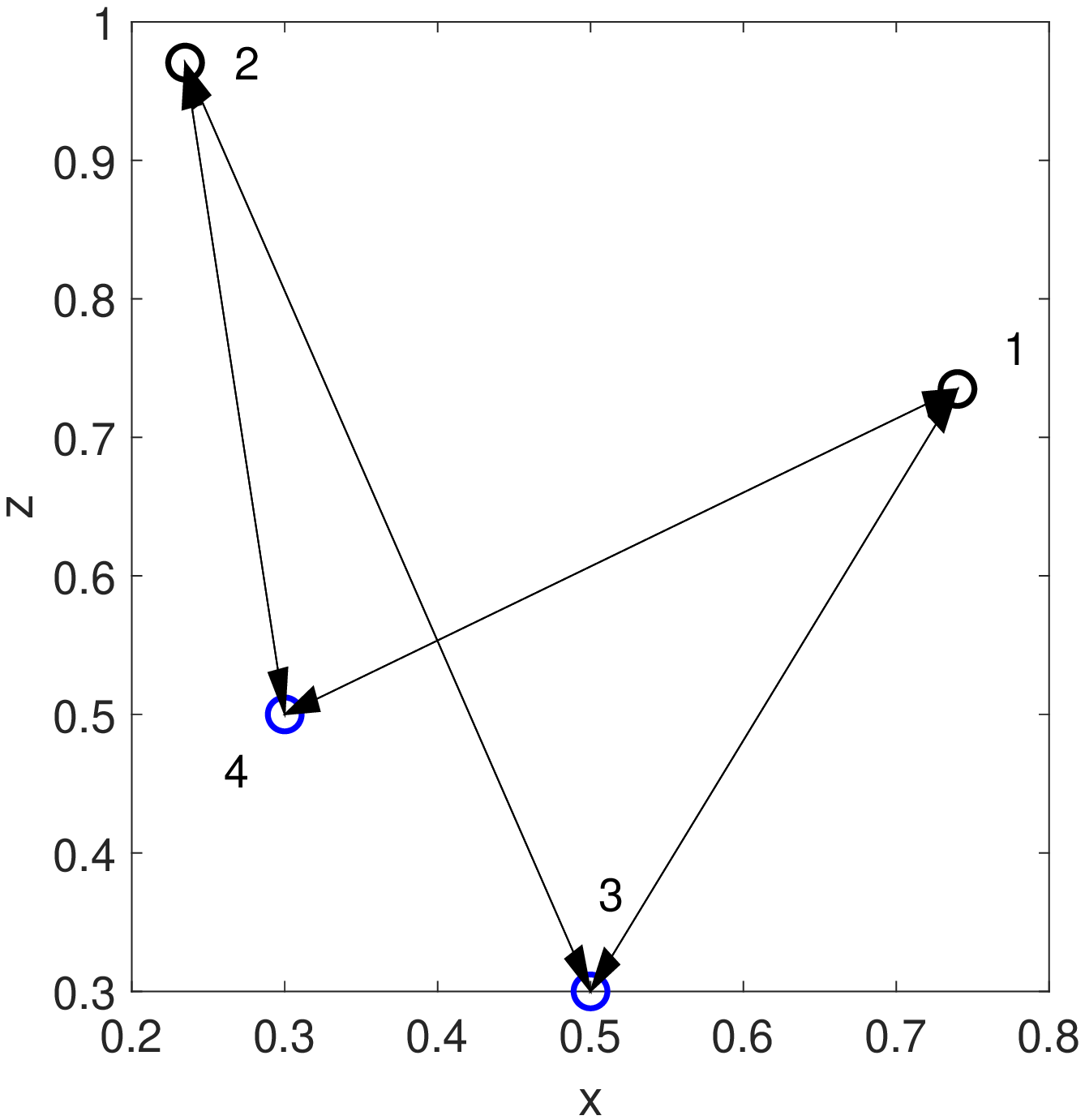}
\includegraphics[scale=0.4]{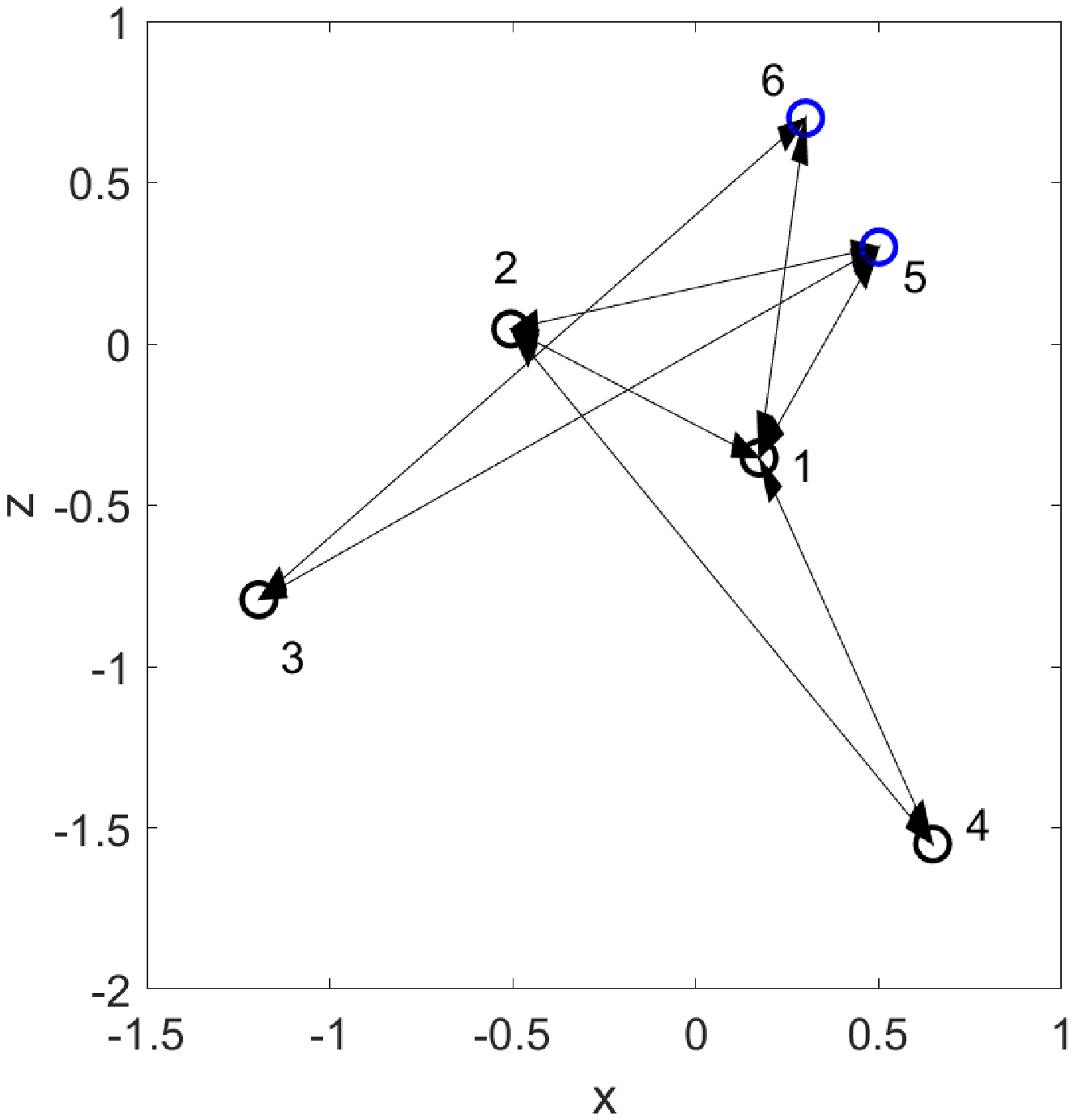}
\caption{The left figure shows a network of four sensors. The blue circles denote the fixed sensors. The black circles denote the sensors of unknown locations. The arrows connecting sensors indicate measurements of distances. The right figure shows a network of six sensors. }
\label{fig:sensors}
\end{figure}

The data model is \[ \vec y = \vec M \vec D + \vec\epsilon\, ,\]
where $\vec D$ is a six-dimensional vector:
\[\vec D = [D_{12},\, D_{13},\, D_{14},\, D_{23},\, D_{24},\, D_{34}]^{\top}\,,\] 
with $D_{ij}=\sqrt{(x_i -x_j)^2 + (z_i - z_j)^2}$. $\vec M$ is a matrix of $d_m$ rows and six columns. It is filled by $0$s and $1$s. We measure the distances between the first and third sensors, the first and fourth sensors, the second and third sensors, and the second and fourth sensors. Accordingly, the matrix $\vec M$ reads
\begin{center}
$\vec M =
\begin{bmatrix}
0 & 1 & 0 & 0 & 0 & 0\\
0 & 0 & 1 & 0 & 0 & 0\\
0 & 0 & 0 & 1 & 0 & 0\\
0 & 0 & 0 & 0 & 1 & 0
\end{bmatrix} 
\,.$
\end{center}

The measurement noises are independent multivariate Gaussian: $\vec \epsilon \sim {\cal{N}}(\vec 0, \vec \Sigma_e), \vec\Sigma_e = \vec I \sigma^2_e, \sigma^2_e=0.0005^2$. We then estimate the expected information gain using the fore-mentioned methods, namely, the MNIS, DLMC and MLA methods. The convergences w.r.t. the number of samples are plotted in Figure \ref{fig:mnis-mla-dlmc}. We used $10^3$ samples in the outer loop of the DLMC method and the MNIS method. The DLMC method takes ten million samples in the inner loop to reach a negligible bias. To the contrary, the MNIS and MLA methods achieved very accurate results, when few samples (around $10^2$) are used. We used $20$ optimization runs to search for the modes in the MNIS and MLA methods. It took around $74.28$ s on a MacBook Air to obtain the MLA result for $n=20$ and $M_5=10^2$. In comparison, it took around $2.25\times 10^6$ s to obtain comparable result of DLMC for $M_1=10^3$ and $N_1=10^7$. Note that all the methods can be equally accelerated by parallel computing.
Note that the horizontal axis of Figure \ref{fig:mnis-mla-dlmc} means differently for different methods (i.e., $N_1$ of the DLMC method, $N_6$ of the MNIS methods, and $M_5$ of the MLA method). It is a fair comparison in that we used identical number of samples in the outer loop of both the DLMC and MNIS methods (i.e., $M_1 = M_6$).    

\begin{figure}[ht]
\centering
\includegraphics[scale=0.4]{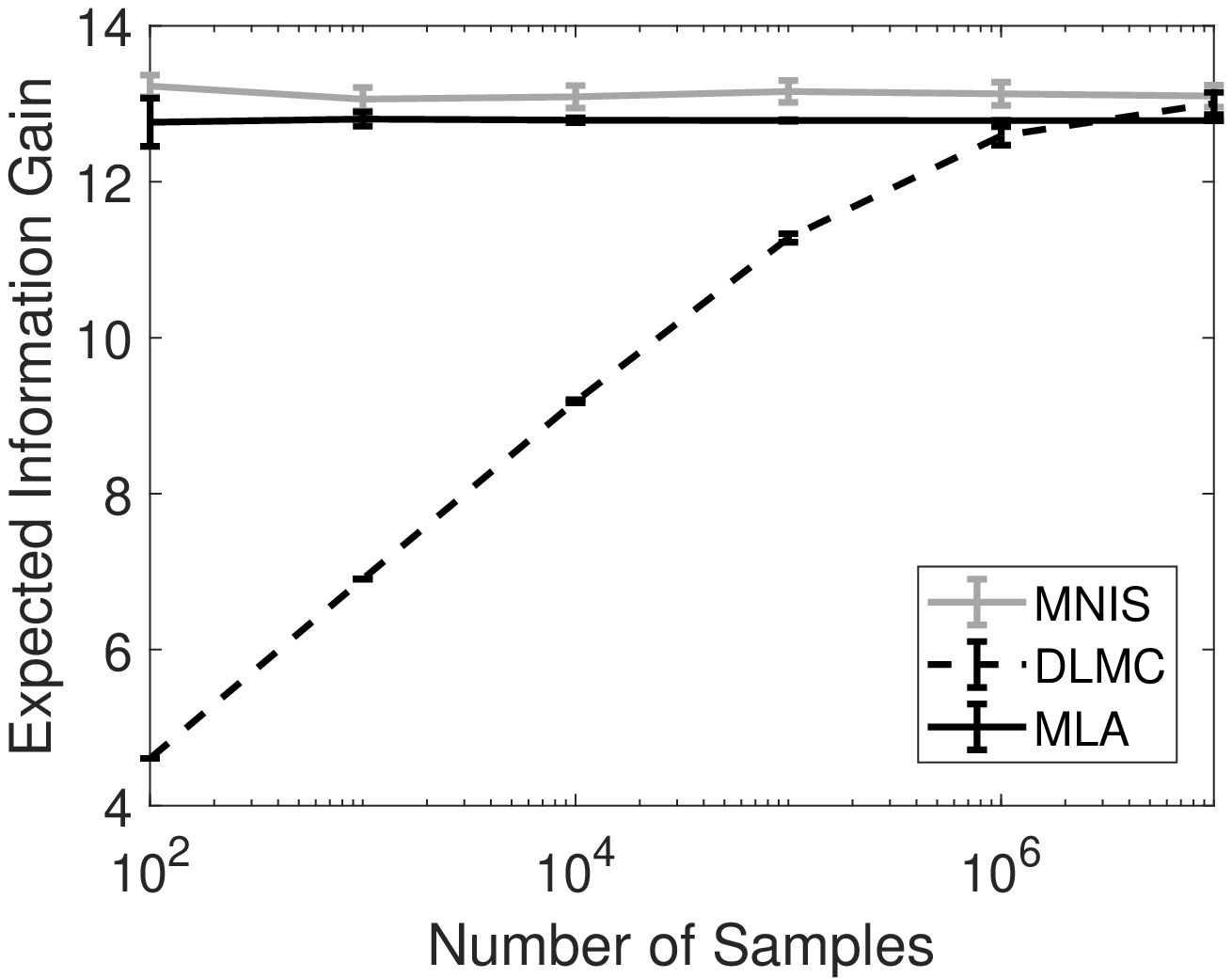}
\includegraphics[scale=0.4]{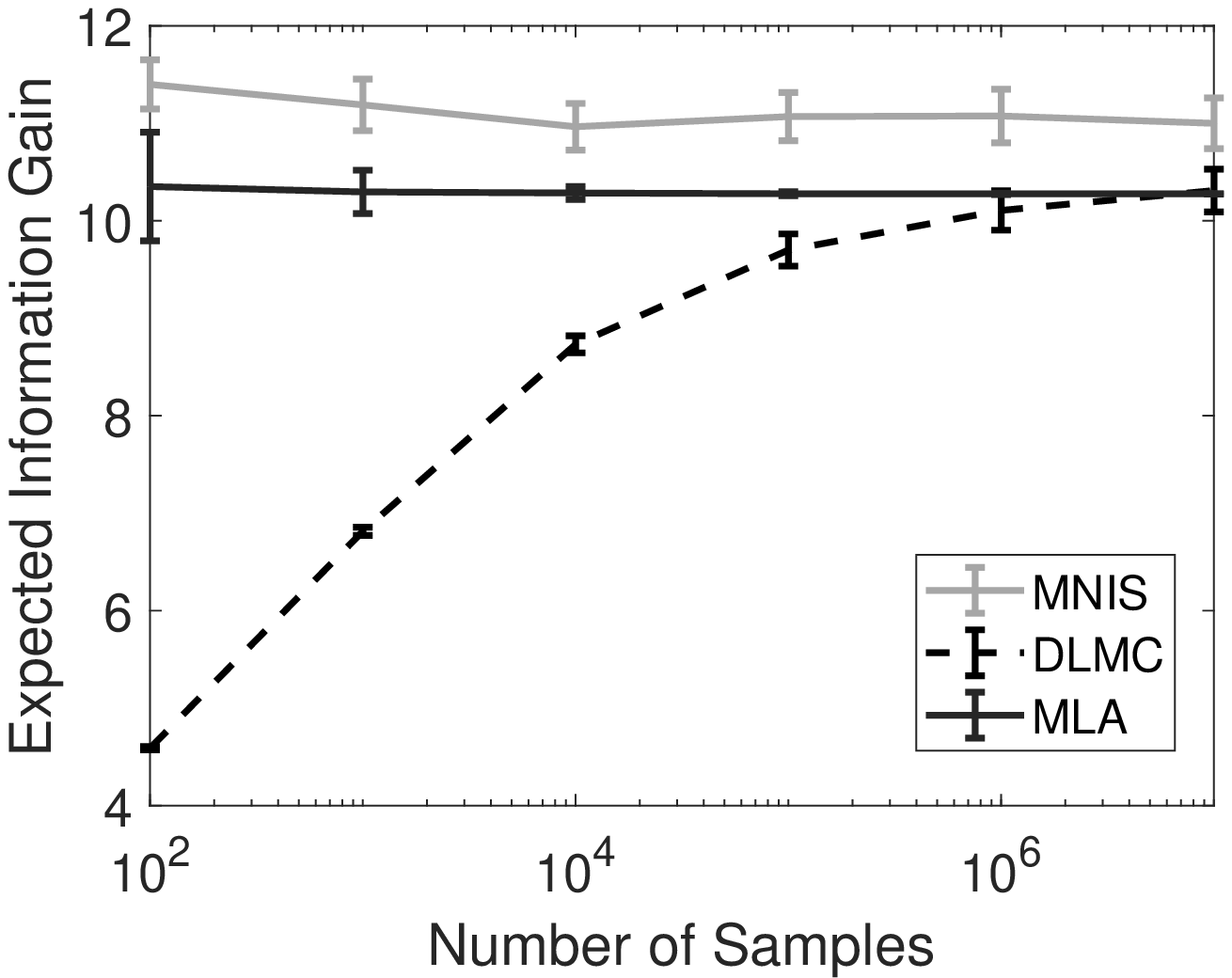}
\caption{The convergences of the expected information gain estimated using the MNIS , DLMC and MLA methods, respectively. $\sigma_e^2=0.0005^2$ in the four-sensor scenario on the left. $\sigma^2_e = 0.16$ in the six-sensor scenario on the right. }
\label{fig:mnis-mla-dlmc}
\end{figure}

In the second scenario shown on the right of Figure \ref{fig:sensors}, we increased the number of sensors to six, out of which the first four sensors are independently distributed according to a multivariate Gaussian distribution (i.e., $x_i \sim {\cal{N}}(0,1)$, and $z_i \sim {\cal{N}}(0,1)$, $i = 1,..., 4$). The locations of the fifth and sixth sensors are fixed at $x_5 = 0.5$, $z_5 = 0.3$, $x_6=0.3$ and $z_6=0.7$. $\vec D$ is a vector of $14$ components, which reads:
\[\vec D = [D_{12}, \, D_{13},\, D_{14},\, D_{15},\, D_{16},\, D_{23},\, D_{24},\, D_{25},\, D_{26},\, D_{34},\, D_{35},\, D_{36},\, D_{45},\, D_{46}]^{\top}\,, \]
where $D_{ij}=\sqrt{(x_i - x_j)^2 + (z_i - z_j)^2}$. The matrix $\vec M$ is the following:
\begin{center}
$
\vec M =
\begin{bmatrix}
1 & 0 & 0 & 0 & 0 & 0 & 0 & 0 &0 & 0 & 0 & 0\\
0 & 0 & 1 & 0 & 0 & 0 & 0 & 0 &0 & 0 & 0 & 0\\
0 & 0 & 0 & 1 & 0 & 0 & 0 & 0 &0 & 0 & 0 & 0\\
0 & 0 & 0 & 0 & 1 & 0 & 0 & 0 &0 & 0 & 0 & 0\\
0 & 0 & 0 & 0 & 0 & 0 & 1 & 0 & 0 &0 & 0 & 0\\
0 & 0 & 0 & 0 & 0 & 0 & 0 & 1 & 0 &0 & 0 & 0\\
0 & 0 & 0 & 0 & 0 & 0 & 0 & 0 & 0 &0 & 1 & 0\\
0 & 0 & 0 & 0 & 0 & 0 & 0 & 0 & 0 &0 & 0 & 1
\end{bmatrix} 
\,.$
\end{center}


The variance of measurement noise is $\sigma_e^2 =0.16^2$. We compare the convergences of the MNIS, DLMC and MLA methods on the right of 
Figure \ref{fig:mnis-mla-dlmc}. Again, we fixed the number of samples in the outer loop to be $1000$ for the DLMC and MNIS methods. It is observed that the DLMC method has a relative error of $10\%$ even with $N_1=10^7$. Then, we set $\sigma_e^2 =0.005^2$. The MNIS and the MLA methods can achieve accurate estimations with high confidence using just $100$ samples, while the DLMC method is erroneous as shown on the left of Figure \ref{fig:mnis-mla-dlmc_sensors_0005}. 


\begin{figure}[ht]
\centering
\includegraphics[scale=0.4]{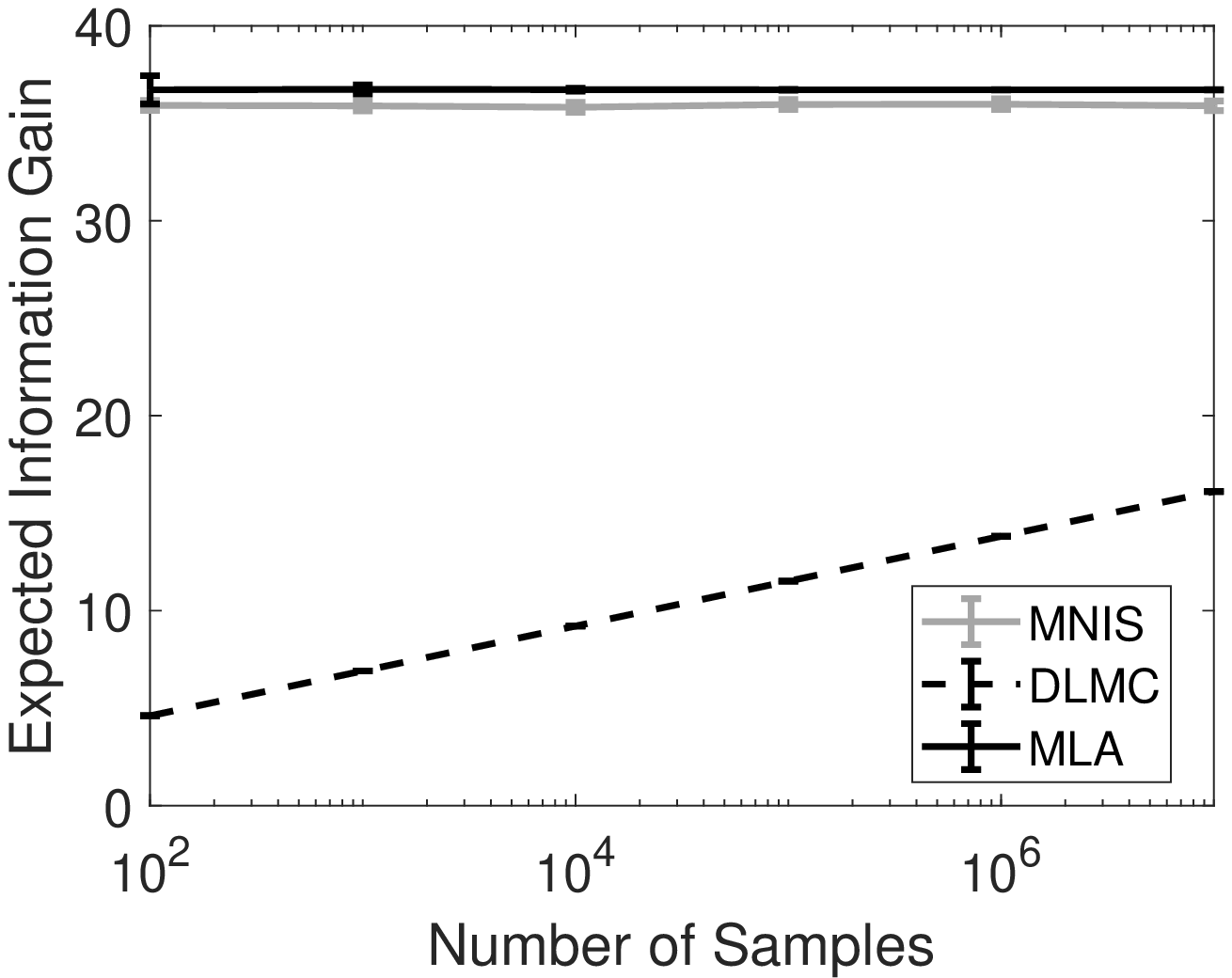}
\includegraphics[scale=0.4]{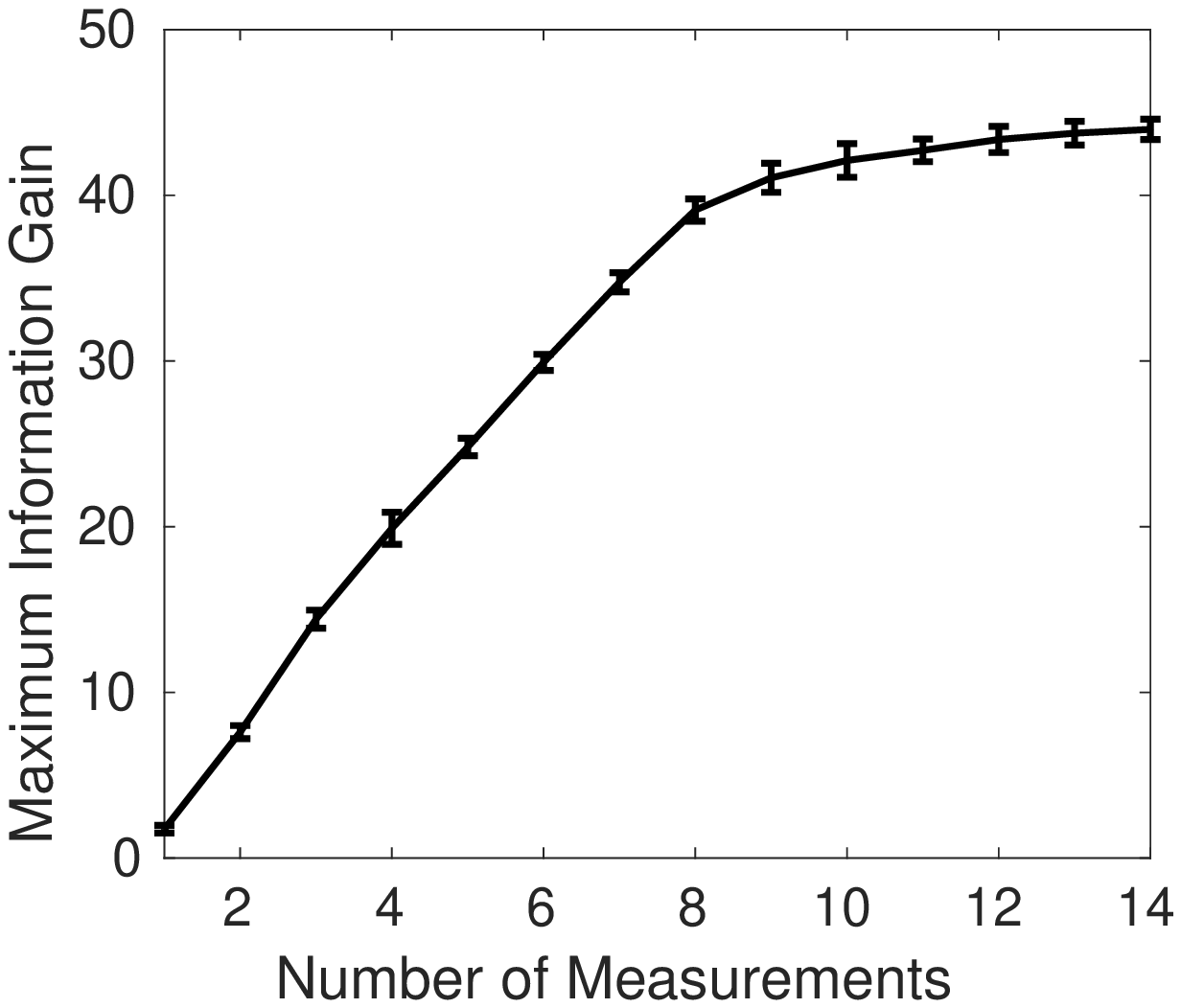}
\caption{The left figure shows the expected information gains computed using MNIS, DLMC and MLA methods, respectively. $\sigma^2_e = 0.005$. The right figure shows the maximum expected information gain of various number of measurements.}
\label{fig:mnis-mla-dlmc_sensors_0005}
\end{figure}

Figure \ref{fig:mnis-mla-dlmc_sensors_0005} on the right shows the maximum expected information gain w.r.t. various numbers of measurements (i.e., $d_m=1,...,14$). It is noted that the rate of information gain saliently slows down after eight of the fourteen unknown distances have been measured. Some of the best configurations for various numbers of measurements are plotted in Figure \ref{fig:bestDesigns}.


\begin{figure}[ht]
\centering
\includegraphics[angle=-90, scale=0.6]{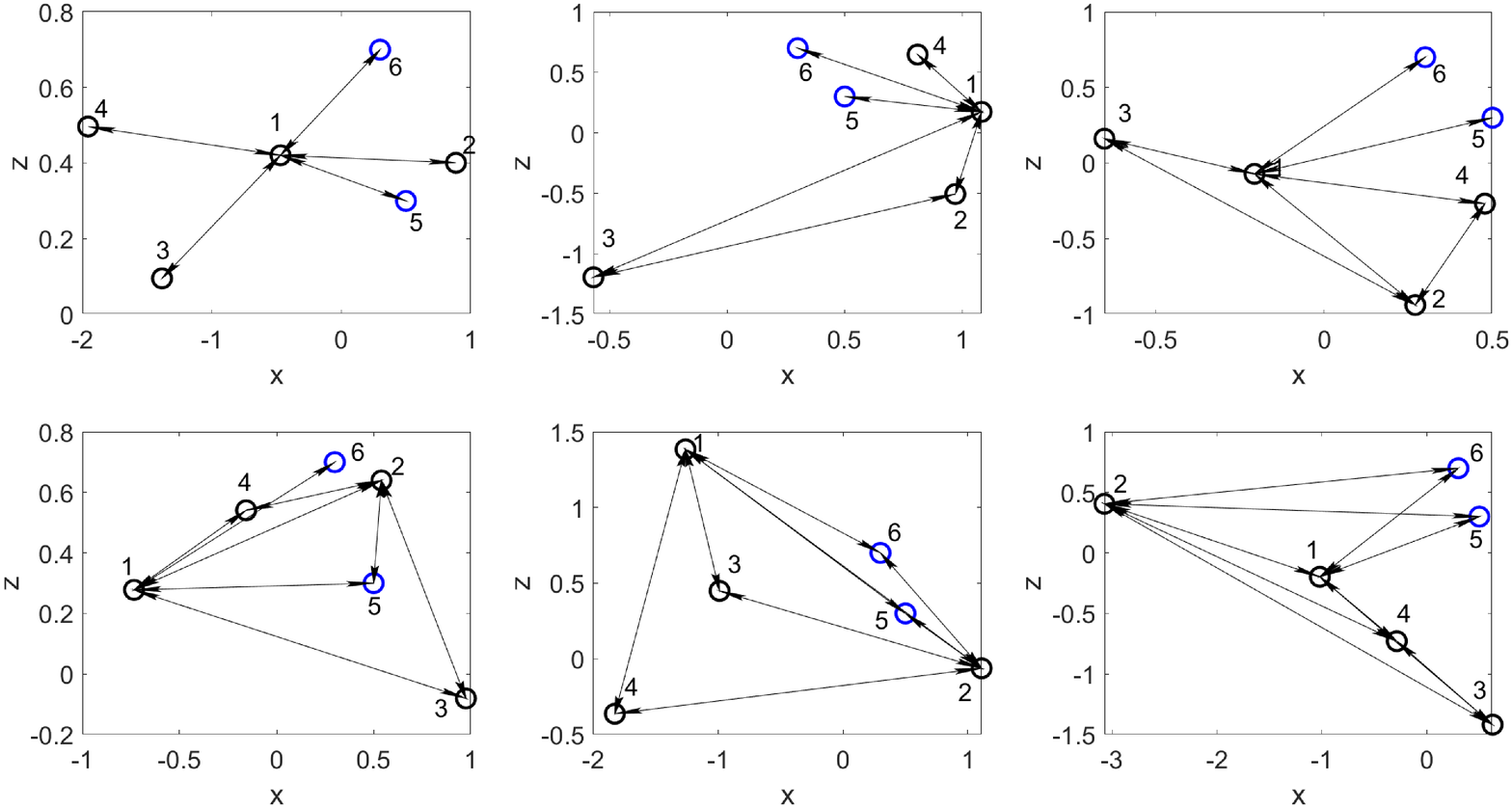}
\caption{The best designs w.r.t. various numbers of measurements.}
\label{fig:bestDesigns}
\end{figure}

\section{Conclusion}\label{sec:CC}
Two novel multimodal approaches have been devleoped to estimate the expected information gain in Bayesian optimal design of experiments. The methods are developed to consider both calibrated and uncalibrated observation noises. A multivariate Gaussian mixture distribution is constructed by few direct optimization runs with randomized initial guesses and weighted local Laplace approximations. In the first approach, the Gaussian mixture distribution is used to approximate the posterior pdf. Hence, the K-L divergence can be computed analytically. And, we can efficiently remove the nested integral in the DLMC method. In the second approach, the Gaussian mixture is used as the proposal pdf in an importance sampling scheme, hence, we can remove the possible Laplace bias in the first approach. It is shown both theoretically and numerically that the methods have similar computational costs. They can be several magnitudes more efficient than the DLMC methods. Additionally, the proposed methodologies are used to efficiently control the error of the expected information gain, therefore, the error in the numerical solution of the optimal design is also restrained. We consider applying the novel multimodal experimental design approach to models based on complex numerical PDEs as one future research direction.

\begin{acknowledgements}
The author would like to thank the associate editor and anonymous referees for their careful reading and constructive comments.
\end{acknowledgements}

%
%


\end{document}